\newtheorem{theorem}{Theorem}
\newcommand{\match}[4]{#1 \, @ \, \left\langle #3, #4 \right\rangle \approx #2}
\newcommand{\matchNOFUN}[3]{#1 \, @ \, #3 \approx #2}
\newcommand{\guardPat}[2]{#1 \, ; \texttt{guard} \left(#2\right)}
\newcommand{\backMatch}[3]{#1 \, ; \left(#3 \approx #2\right)}
\newcommand{\scott}[1]{\ensuremath{\llbracket #1 \rrbracket}}
\newcommand{\actDoMatch}[2]{\texttt{match}\left(#1,#2\right)}
\newcommand{\actUnbind}[1]{\texttt{checkName}\left(#1\right)}
\newcommand{\actCheckGuard}[1]{\texttt{guard}\left(#1\right)}
\newcommand{\actBackMatch}[2]{\texttt{matchConstr}\left(#2,#1\right)}
\newcommand{\stkEmp}{[\,]}
\newcommand{\bt}[4]{\left(#2,#3,#4\right) \texttt{::} #1} 
\newcommand{\btNODENOFUN}[2]{\left(#1,#2\right)} 
\newcommand{\btNOFUN}[3]{\btNODENOFUN{#2}{#3} \texttt{::} #1} 
\newcommand{\success}[2]{\texttt{success}\left(#1,#2\right)}
\newcommand{\failure}{\texttt{failure}}
\newcommand{\running}[4]{\texttt{running}\left(#1,#2,#3,#4\right)}
\newcommand{\successNOFUN}[1]{\texttt{success}\left(#1\right)}
\newcommand{\runningNOFUN}[3]{\texttt{running}\left(#1,#2,#3\right)}
\newcommand{\stateTrans}[2]{#1 \mapsto #2}
\newcommand{\contNil}{[\,]}
\newcommand{\contCons}[2]{#1 \texttt{::} #2}
\newcommand{\backtrack}[1]{backtrack\left(#1\right)}
\newif{\ifdraft}
\newif{\ifextended}
\newcommand{\theappendix}{\ifextended Appendix~\ref{app:technical}\else the extended version \cite{extended}\fi}
\newcommand{\dlcb}{DLCB}
\newcommand{\pypm}{PyPM}
\newcommand{\core}{CorePyPM}
\DeclareMathOperator{\arity}{arity}
\newcommand{\rulename}[1]{\textsc{#1}}
\begin{document}

\title{Pattern Matching in AI Compilers and its Formalization\ifextended~(Extended)\fi}

\author{Joseph W. Cutler}
\authornote{Work carried out while at NVIDIA} 
\email{jwc@seas.upenn.edu}
\affiliation{
    \institution{Univeristy of Pennsylvania}
    \country{United States}
}

\author{Alex Collins}
\email{acollins@nvidia.com}
\affiliation{
    \institution{NVIDIA}
    \country{United States}
}
\author{Bin Fan}
\email{binf@nvidia.com}
\affiliation{
    \institution{NVIDIA}
    \country{United States}
}

\author{Mahesh Ravishankar}
\email{mahesh.ravishankar@amd.com}
\authornotemark[1]
\affiliation{
    \institution{AMD}
    \country{United States}
}

\author{Vinod Grover}
\email{vgrover@nvidia.com}
\authornote{Corresponding Author} 
\affiliation{
    \institution{NVIDIA}
    \country{United States}
}

\begin{abstract}
\pypm{} is a Python-based domain specific language (DSL) for building rewrite-based
optimization passes on machine learning computation graphs.
Users define individual optimizations by writing (a) \emph{patterns} that match
subgraphs of a computation graph and (b) corresponding \emph{rules} which
replace a matched subgraph with an optimized kernel.
\pypm{} is distinguished from the many other DSLs for defining rewriting passes
by its complex and novel pattern language which borrows concepts from logic programming. \pypm{} patterns can be recursive,
nondeterminstic, and can require checking domain-specific constraints such as
the shapes of tensors. The \pypm{} implementation is thus similarly complicated,
consisting of thousands of lines of C++ code.
In this paper, we present our work on building PyPM, as well as formalizing and distilling and this complexity to an
understandable mathematical core.
We have developed a formal core calculus expressing the main operations of the \pypm{} pattern language.
We define both a declarative semantics --- describing which patterns
match which terms --- and an algorithmic semantics --- an idealized version of
the \pypm{} pattern interpreter --- and prove their equivalence. The development
is fully mechanized in the Coq proof assistant.
\end{abstract}

\begin{CCSXML}
    <ccs2012>
       <concept>
           <concept_id>10011007.10011006.10011041</concept_id>
           <concept_desc>Software and its engineering~Compilers</concept_desc>
           <concept_significance>500</concept_significance>
           </concept>
       <concept>
           <concept_id>10003752.10003790.10003798</concept_id>
           <concept_desc>Theory of computation~Equational logic and rewriting</concept_desc>
           <concept_significance>500</concept_significance>
           </concept>
     </ccs2012>
\end{CCSXML}
    
\ccsdesc[500]{Software and its engineering~Compilers}
\ccsdesc[500]{Theory of computation~Equational logic and rewriting}

%
\maketitle

\section{Introduction}
\label{sec:intro}
AI compilers transform machine learning models written in
high-level domain specific languages --- such as PyTorch \cite{pytorch}, TensorFlow \cite{tensorflow}, and Jax \cite{jax} ---
into low-level code to be executed on the GPUs required for high-performance in modern AI applications. 
These compilers are critically important for the AI ecosystem, as they allow AI researchers and engineers to express their code in the familiar style of
tensor algebra and still have them run performantly.
However, the GPU programming interfaces that these compilers must lower to are
complex and evolving rapidly. For AI compiler engineers, this means that
ensuring that their project can emit the highest-performance kernels is a
constantly moving target. For users of AI compilers, it often means that
they cannot utilize the full potential of their hardware without bypassing their
compiler entirely~\cite{flashattention}. 

To underscore the degree of complexity involved in targeting the
highest-performance GPU code, we consider the example of targeting cuBLAS \cite{cublas}, a
BLAS-family \cite{blas} GPU vendor library for GPU-accelerated linear algebra. cuBLAS includes extremely performant hand-tuned kernels (GPU-launched subroutines) that
perform common tensor operations like matrix multiplication. Ideally, an AI compiler
writer could simply lower uses of matrix multiplication to calls to the cuBLAS library.
However, using these
kernels in place of a naive implementation requires checking a great many
constraints first. The cuBLAS kernels work for only a small number of tensor
sizes and shapes, do not support all datatype combinations, and require their
data arguments to be laid out with complex swizzle patterns \cite{swizzleinventor}. A large
bottleneck to compiler engineers targeting these optimized kernels is that it takes
a lot of infrastructure and architectural knowledge to even teach a compiler
to \emph{recognize} when part of a tensor computation written by a user in a high-level language
can be lowered to a use of an optimized kernel. Indeed, with GPU architectures evolving rapidly, we
cannot expect compiler developers to integrate every high-performance kernel
from GPU vendors into their instruction selection and optimization routines.

To this end, we have developed \dlcb{}. \dlcb{} is a standalone GPU compiler backend
for AI compilers. \dlcb{} takes the tensor computation graphs that
are emitted by AI compilers, and transforms them to utilize the
latest GPU tensor instructions or the fastest hand-tuned fused kernels.
We note that we are not the first researchers to notice the problem of fast-moving architectural targets \cite{exo, tvm},
and we are far from the first to develop a bespoke compiler backend to optimize AI model code \cite{tvm, xla}.

The main contribution of this paper is rather a subsystem of \dlcb{} called \pypm{}, which is a new
domain-specific language, embedded in Python, 
designed to help solve the subgraph recognition problem. Programs in \pypm{}
describe both (a) complex \emph{patterns} that are found in tensor computation
graphs, and (b) their replacements, ideally by optimized kernels. \pypm{} is
designed to cleanly express the kind of pattern constraints that are required to
encode the preconditions to use specialized tensor kernels --- things like
tensor shape and element data type constraints, and large nested (and
potentially recursive) chains of linear algebra operations.

As we will see, this expressiveness derives from the fact that \pypm{} supports features beyond those of
most pattern languages. The cost of this expressiveness, however, is an implementation that is challenging
both to write and understand. The part of \dlcb{} that implements a matcher for \pypm{} patterns
is comprised of thousands of lines of C++. This implementation is not based on any well-known algorithm,
and was built incrementally as features were added to PyPM during the langauge design process.
Moreover, in absence of a specification, it is not even clear what it would mean for the code to be ``correct''.

This is an unacceptable state of affairs for a piece of code we intend to run as a critical component of an
AI compiler. To this end, the core of the present project is an effort to develop a formal specification
for what it \emph{means} to match a \pypm{} pattern. We can then distill the large and unwieldy C++ codebase into a
straightforward mathematical algorithm, and prove that it matches the specification.
Even without verifying the C++ code itself, this effort provides us with an added layer of confidence by ensuring that \pypm{}
rests on a sound mathematical foundation.

\subsection*{Pattern Matching, in PyPM and Elsewhere}

Pattern definitions in \pypm{} look like Python methods whose body describes the kinds of subgraphs they match. For example, a pattern definition that returns $$\texttt{MatMul(Trans(x),Trans(y))}$$ matches
operator graphs where two tensor computations, \texttt{x} and \texttt{y} are separately fed into matrix transpose nodes (\texttt{Trans}), and then both passed to a 2-input \texttt{MatMul} node
The body of a pattern definition can have assertions like \texttt{assert x.shape.rank == 2}, which ensures that for this pattern to match, the
subgraph \texttt{x} must produce a rank-2 tensor. Other \pypm{} features include \emph{pattern alternates} --- which behave like disjunctions of patterns, matching if either of the disjuncts match --- \emph{recursive patterns} --- which let users define patterns that match arbitrarily deeply nested subgraphs ---
\emph{nonlinear patterns} --- which let users write patterns like \texttt{MatMul(x,x)} to ensure that the two sides of a matrix multiply are equal ---
and \emph{function patterns} --- which allow users to define patterns like ``any binary operator that produces an \texttt{f32} result''.
Each \pypm{} pattern can have multiple rewrite rules associated with it, each defining a replacement for a
subgraph matched by the pattern. Rules are similarly defined as methods whose body describes the ``right hand side'' of the rewrite. For example, the \texttt{MatMul(Trans(x),Trans(y))}
pattern could have a rewrite rule returning \texttt{Trans(MatMul(y,x))}, replacing the product of transposes by the transpose of the product (in the opposite order).

The \pypm{} langauge is also designed to be highly modular and portable. Rather than implementing a bespoke language toolchain, we have implemented it as an embedded DSL in Python
that can be imported as a library. This ensures that devleopers can use the familiar Python syntax and Pythonic programming style to build complex structures of patterns and rules without having to resort to
a new language. Indeed, pattern and rule definitions can be essentially arbitrary pure Python methods; the translation down into a restricted core calculus of patterns is handled by a symbolic execution-like technique described in Section~\ref{subsec:pypm-impl}.


The concepts of pattern matching, rewriting, instruction selection, and all
sorts of pattern-based compilation have long and rich histories in computer
science --- \pypm{}'s design is inspired by all of these lines of research.
While a full description of related projects can be found in
Section~\ref{sec:related-work}, we touch on a few of \pypm{}'s most important
ancestors here, and describe how it differs from some other points in the
language design space.

The most direct comparison for \pypm{} is any number of languages and systems
for describing and implementing rewriting.  Indeed, interest in (nondestructive)
rewriting systems has recently been rejuvinated by the Egg project \cite{egg, egglog}
and its successs in many domains, including inside tensor and vectorizing
compilers \cite{tensat,diospyros}.
With the more superficial distinctions aside (destructive instead of
nondestructive rewriting), there are two main differences between rewriting
systems like Egg and \pypm{}.  First, the \pypm{} project
is primarily focused on designing and formalizing an expressive, ergonomic, and
portable pattern language for broad use in tensor compilation, as opposed to
building the most efficient general purpose rewriting engine possible. Second,
\pypm{} is not strictly a language for rule-based rewriting --- as previously
discussed, the ability to define complex patterns over computation graphs is
useful for other kinds of optimizaiton as well.

\pypm{} can also be seen as a logic programming language. A pattern
defines a logical formula, where the free variables are pattern variables, and a
satisfying assignment is a match in the graph. Indeed, this connection can be
made explicit by viewing the computation graph as a database of edges between
operator nodes, and \pypm{} patterns as queries \cite{logic-programming}. This view
inspired many of \pypm{}'s features, some of which fall out directly from taking
this analogy seriously. In particular, recursive patterns correspond to
recursive queries, pattern alternates are rules with multiple disjunctive clauses,
and function patterns allow for limited ``second-order'' logic programming.
The main distinction between \pypm{} and a logic language is the evaluation model.
Since \pypm{}'s rewriting is destructive, evaluation is not monotone: creating new facts destroys old facts.

Lastly, many programmers are familiar with pattern matching as a feature in
functional programming languages.  The pattern matching that \pypm{} enables is
superficially similar --- matching a structure in an ADT and generating bindings
for free pattern variables --- but its features go far beyond what would make
sense in a logic language.  In particular, pattern alternates and recursion
would be impossible to implement as efficient code. Some experimental
``pattern-oriented'' programming languages like Egison \cite{egison} do include
features like these, but even then, the use case is quite different.


\subsection*{Contributions}
Our paper has three main contributions.

\begin{enumerate}
    \item First, we contribute the \pypm{} language, a new point in the design space of languages for pattern matching. \pypm{} is designed for use by experts in GPU programming,
    and is expressive enough to describe the kinds of complex patterns of subgraphs that can be replaced by optimized kernels in tensor computations.
    \item We present a formalism for \pypm{}, placing our implementation on a solid mathematical footing. We describe
    a definitional and algorithmic semantics, and proving a soundness theorem. Our work is mechanized in the Coq proof assistant \cite{coq}.
    \item Last, we demonstrate that \pypm{} is works to meet our goals by showing that effective hand-crafted optimizations can be encoded in \pypm{}.
    We also describe other use-cases we have developed for \pypm{} along the way, to show how the ability to write expressive and match patterns is generally 
    useful for AI compiler development.
\end{enumerate}

\section{\pypm{} Language Explanation Via Examples}
\label{sec:examples}

cuBLAS is a library of hand-crafted kernels for accelerated linear algebra. The most important operation in cuBLAS is \texttt{GEMM},
the ``GEneral Matrix Multiply''. \texttt{GEMM} computes $D \leftarrow \sigma \left(\alpha \text{Op}(A)\text{Op}(B) + \beta C\right)$,
where $A$, $B$, and $C$ are matrices, $\alpha$ and $\beta$ are scalars, and $\sigma$ is some pointwise activation function like
\texttt{RELU} or \texttt{GELU} \cite{gelu}. \texttt{GEMM} is a simple example of a high-performance kernel which can be complicated to apply,
in this case because of the sheer number of variants and options available. The library also defines
kernels for a large variety of dimensions, data types (integer and floating point), and activation functions.
Moreover, the $\text{Op}(A)$ in the formula means that the $A$ can be specified to be either $A^T$ or $A$, essentially giving the option to \emph{fuse}
a transpose operation into the call to \texttt{GEMM}. Lastly, not all of these options are compatible, further complicating the question of when using cublas in place of a more basic matrix multiplication is possible.

\begin{figure}
\begin{verbatim}
@op
def MatMul(x,y):
    return 1

@op Trans(x):
    return 1

@op cublasMM_xyT_f32(x,y):
    return 1
    
@op cublasMM_xyT_i8(x,y):
    return 1

@pattern MMxyT(x,y):
    assert x.shape.rank == 2
    assert y.shape.rank == 2
    yt = Trans(y)
    return MatMul(x,yt)

@rule(MMxyT)
def cublasrule(x,y):
    assert (x.eltType == f32 && y.eltType == f32)
        || (x.eltType == i8 && y.eltType == i8)
    if x.eltType == f32 && y.eltType == f32:
        return cublasMM_xyT_f32(x,y)
    elif x.eltType == i8 && y.eltType == i8:
        return cublasMM_xyT_i8(x,y)

\end{verbatim}
\caption{cuBLAS Pattern Example}
\label{fig:mm-to-cublas}
\end{figure}

A \pypm{} program comes in three main parts. The first is a list of declarations of the operators that will appear in patterns.
An operator is declared by defining a method, annotated by an \texttt{@op} decorator, whose name is an alias for the operator. The number of parameters the method has specifies the operator's arity,
and the return value (required to be an integer) defines its output arity.
Operators may have other parameters that are not a part of the dataflow graph: for instance, a convolution operator must specify a \texttt{stride}, but the stride value is not an input or an the computation graph node.
We call these extra parameters ``attributes'', and they can be listed in the operator definition header inside the \texttt{@op} decorator.
Figure~\ref{fig:mm-to-cublas} defines four operators, \texttt{MatMul} (matrix multiply), \texttt{Trans} (transpose), and both \texttt{f32} and \texttt{i8} versions of \texttt{cublasMM\_xyT}, which is an optimized \& fused CUDA kernel for
computing $x \cdot y^T$.

After the operator declarations, a programmer can define patterns.
Pattern definitions are written in \pypm{} as Python methods annotated with a \texttt{@pattern}
decorator. The arguments to the method specify the pattern's free variables, while the return value of the method
is the pattern itself. The pattern looks like a composition of operators (defined above) and the in-scope pattern variables.
For example, Figure~\ref{fig:mm-to-cublas} includes a pattern definition \texttt{MMxyT} which returns \texttt{MatMul(x,Trans(y))}.
This pattern matches against subgraphs of a computation graph that look like \texttt{MatMul(a,Trans(b))}, where \texttt{a} and \texttt{b} are themselves subgraphs.
A successful match then generates the \emph{substitution} $\{\texttt{x} \mapsto \texttt{a}, \texttt{y} \mapsto \texttt{b}\}$, which maps the pattern variables to the concrete subgraphs that make up the match.
Note the use of an intermediate variable \texttt{yt = Trans(y)}. Because patterns can be large and often include a great deal of sharing,
programmers can bind sub-patterns to local names. These local variable names do not get their own bindings in the substitution, they are merely aliases.
The pattern definition \texttt{MMxyT} also demonstrates another feature of \pypm{}, namely \emph{constraints}. This pattern
does not match merely any subgraph that looks like $xy^T$, it requires that both $x$ and $y$ be rank $2$ tensors (i.e. matrices). Constraints can be boolean expressions involving attributes of the pattern variables like \texttt{shape}, or \texttt{eltType},
and are imposed on the pattern using \texttt{assert}. Unlike operators which may have user-defined attributes, all terms (the things that pattern variables bind to) have the same set of tensor-specific attributes including element type, shape, and rank.

Lastly, programmers write rules. Rules are defined in \pypm{} as methods
annotated with an \texttt{@rule(Pat)} annotation, where \texttt{Pat} is the name
of the corresponding pattern. The return value of a rule definition is an expression
that defines the subgraph that will replace the one matched by the corresponding pattern.
Like pattern definitions, rule definitions may contain arbitrary pure Python code, including additional assertions.
Then, when a pattern matches and generates a substitution, \pypm{} runs each of the corresponding rules one by one,
binding the rule's pattern variables to the subgraphs from the substitution.
The first rule whose assertions pass is \emph{fired}, and the subgraph that it defines replaces the root node of the match
in the computation graph. That this rewriting is \emph{destructive}: the matched subgraph is completely replaced by the one defined by the rule.
However, if no rule can apply, then none fires.
Figure~\ref{fig:mm-to-cublas} shows a rule definition called \texttt{cublasrule} for the pattern \texttt{MMxyT}. This rule defines an optimization that replaces subgraphs of the form \texttt{MatMul(x,Trans(y))} with the single-node fused operation
\texttt{cublasMM\_xyT}. It also has an assertion that requires that the elements of the \texttt{x} and \texttt{y} tensors be either both be of type \texttt{i8} or \texttt{f32}. Depending on the particular types, the rule returns different replacements.






\subsection{Pattern Alternates}
The Gaussian Error Linear Unit operator \cite{gelu}, defined as $\text{GELU}(x) = \frac{x}{2} \left(1 + \text{erf}\left(\frac{x}{\sqrt{2}}\right)\right)$, is an activation function used in some popular transformer
models.
Because there are syntactically different ways to write this operation, the subgraph that implements the GELU in different models may appear in different ways.
In fact, within the different transformer models within the Huggingface (HF) Transformers Repository \cite{hf-transformers},
the division of \texttt{x} by $2$ appears both as \texttt{Div(x,2)} in some places and \texttt{Mul(x,0.5)} as others.
In cases like this where an AI compiler phrases the same conceptual operation in multiple ways, programmers can avoid an exponential blowup in the number of patterns they must write to optimize that operation by using \emph{pattern alternates}.
Figure~\ref{fig:gelu-example} shows a pattern \texttt{Half(x)} with two alternates, one which defines it as division by two, and another as mulitiplication by a half.
This pattern matches against a subgraph if either of its two alternates match.
Syntactically in \pypm{}, pattern alternates are written by writing two different patterns with the same name.
Figure~\ref{fig:gelu-example} also shows a single GELU pattern which uses \texttt{Half} for the $\frac{x}{2}$ part. This pattern matches a GELU that uses either of the variants found in the Huggingface Transformers.

Pattern alternates are tried in order that they are defined in the file. If
matching one alternate fails for any reason --- either because an
\texttt{assert} fails or a constructor does not match --- matching backtracks,
erasing any bound variables and proceeding with the next match.

\begin{figure}
\begin{verbatim}
@pattern
def Half(x):
    return Div(x,2)

@pattern
def Half(x):
    return Mul(x,0.5)

@pattern
def Gelu(x):
    return Mul(Half(x),Plus(1,Erf(Div(x,1.41...))))
\end{verbatim}
\caption{Alternate GELU Pattern}
\label{fig:gelu-example}
\end{figure}

\subsection{Recursive Patterns and Function Patterns}

One use case for \pypm{} is to fuse chains of operations into single nodes. For example, since \texttt{RELU}
is idempotent, an arbitrary chain of \texttt{RELU}s can be compressed to a single \texttt{RELU}. However,
\pypm{} as described so far cannot implement this use case, since all of the patterns described so far match subgraphs of fixed size. To match arbitrary-size graphs, we use recursive patterns.
As the name suggests, recursive patterns allow one to define a pattern that refers to itself in its own body, which lets
patterns pick out arbitrary-sized tree-shaped subgraphs from a computation graph.

The example in Figure~\ref{fig:unaryop-example} shows a pattern \texttt{UnaryChain} that matches subgraphs
that look like \texttt{RELU(RELU(RELU(...)))}. In fact, this example actually matches more subgraphs than just those:
it uses \emph{function patterns}, abstracting over the particular operation in question, and matching a
tower of a single arbitrary unary operator \texttt{f}.
Used alone, matching recursive patterns never terminates. Like normal recursion, recursive patterns need a base case.
In \pypm{}, this is accomplished with pattern alternates. In this example, the second alternate for \texttt{UnaryChain}
does not make recursive use of itself, and so once the first alternate fails to match, the matcher will backtrack and select
the second, terminating the match.

\begin{figure}
\begin{verbatim}
@pattern
def UnaryChain(x,f):
    return f(UnaryChain(x,f))

@pattern
def UnaryChain(x,f):
    return f(x)
\end{verbatim}
\caption{Recursive Unary Function Pattern}
\label{fig:unaryop-example}
\end{figure}

\subsection{Local Variables and Match Constraints}
Figure~\ref{fig:unary-binary-example} shows a pattern similar to
the unary operation chain example of Figure~\ref{fig:unaryop-example},
matching subgraphs that are composed of combinations of one binary and one unary operation.
However, there is one main difference. The variable \texttt{x} returned by a successful match
gets bound to the \emph{root} of the tree, as opposed to one of the leaves.
This is accomplished by a combination of two new \pypm{} features:
match constraints and local variables.

Local variables are straightforward. The special function \texttt{var()} generates
a fresh pattern variable, locally scoped to the pattern being defined.
For the (overarching) pattern to match, every fresh variable introduced must 
eventually be bound to \emph{some} subterm.

Like guard patterns, match constraints are another kind of assertion constraining the variables in a pattern.
The match constraint \texttt{x <= p} ensures that if \texttt{x}
gets bound to a subgraph \texttt{t} to during matching of the surrounding pattern, \texttt{t} must itself match \texttt{p}.

\begin{figure}
\begin{verbatim}
@pattern
def P(x,f,g):
    y = var()
    x <= f(P(y,f,g))
    return x

@pattern
def P(x,f,g):
    y = var()
    z = var()
    x <= g(P(y,f,g), P(z,f,g))
    return x

@pattern
def P(x,f,g):
    return x
\end{verbatim}
\caption{Recursive Pattern with Local Variables and Match Constraints}
\label{fig:unary-binary-example}
\end{figure}

\subsection{\pypm{} Implementation}
\label{subsec:pypm-impl}

\pypm{} is implemented in two components. The first half is a frontend, written as a library in
Python, that transforms the shalowly embedded syntax of \pypm{} programs into a
portable serialized binary format.  This transformation is essentially a
symbolic execution which instruments the method defintions annotated by
\texttt{@pattern} and \texttt{@rule} decorators, and then evaluates them with
specially constructed ``symbolic inputs'' which trace all paths in the method
bodies, collecting pattern constraints and returning a final pattern object to
be serialized.  These serialized pattern binaries are then dynamically loaded
into and interpreted by the second half, a C++ backend integrated into the
\dlcb{} compiler. This pattern-matching pass of \dlcb{} repeatedly walks the
nodes in an operator graph, greedily rewriting all of the patterns it can match
until no matches remain.

\pypm{} files are regular Python files, and are executed by the Python interpreter.
Importing the \texttt{pypm} library brings the \texttt{@pattern} and \texttt{@rule} decorators into scope,
and interpreting a method annotated with one of these decorators adds it to a global registry visible to the library.
The entry point to the compiler is when a script asks to serialize its rules with \texttt{pypm.serialize()}. When this happens,
the library walks all of the registered method definitions and modifies their syntax, replacing builtin python operations
with special versions that can be intercepted.
For example assertions \texttt{assert e} are turned into calls to a function \texttt{\_pattern\_assert(e)}, assignments are turned into calls to \texttt{\_pattern\_bind\_name(x,e)},
and control flow is replaced by code that will execute every branch, keeping track of which branch it's in.
The transformed code is then JIT-compiled with the python bytecode compiler and then run, passing in special \texttt{pypm.Parameter}
objects for all of its inputs. The execution of the pattern/rule traces the
function and returns a python object that is its core calculus representation. Last, 
these symbolic patterns and rules are serialized into the portable format and returned to the user, who can then dump them to a file.

When \dlcb{} starts, it dynamically loads and parses a user-specified set of pattern binaries.
Then, when the rewriting compiler pass runs on an operator graph,
the compiler repeatedly traverses the graph, attempting to match any of the patterns.
Each time a node is visited, the compiler attempts to match the subtree rooted
at that node against each of the loaded patterns, in order of their appearance
in the original python file. When a match is found, the corresponding rule (if
any) fires, and the replacement is built and substituted into the graph in place
of the subgraph the pattern matched.
The matching subroutine itself is complex (a great deal of complex C++), and requires walking both
the operator graph and the pattern. Our theoretical contribution in Section~\ref{sec:theory} contains a stylized account of this algorithm, and a proof that it agrees with
an intuitive semantics of \pypm{}.
In short, the algorithm checks that the subgraph rooted at the current node has the same structure as the pattern, and that all of the assertions (guards) evaluate to true. If any nested patterns are encountered,
\dlcb{} recursively applies this algorithm at the current node. If the pattern has alternates, \dlcb{} tries them in order, backtracking and trying the next alternative if matching fails.

\section{Theory}
\label{sec:theory}

In this section, we present \core{}, our formalization of \pypm{}.
\core{} simplifies the details of \pypm{} down to a set of core pattern primitives, abstracting away from its embedded Python syntax and complicated C++ implementation.
Moreover, computation graphs of operators are abstracted as syntax trees in \core{}, and so the calculus takes the familiar form of pattern matching against terms.
\core{} has two semantics which describe the operation of this pattern matching langauge.
The first is a \emph{declarative} semantics; an idealized description of which terms match which patterns. This semantics is highly nondeterministic,
and makes clairvoyant guesses in the presence of pattern alternates. The second is an \emph{algorithmic} semantics which gives an account of how the actual matching algorithm runs, maintaining a stack of alternatives
and backtracking when matches fail.
This setup can be thought of through the analogy of \pypm{} with logic languages. The declarative semantics
can be thought of as a proof system for pattern matching: given a witness, verify that the formula is satisfied. Meanwhile, the algorithmic semantics defines a \emph{proof search} procedure for the logic: search for a witness to the formula.

In Section~\ref{sec:simple-patterns}, we describe the minimal subset of \core{} required to understand \core{} and its two semantics.
We then layer on the rest of the constructs in the rest of Section~\ref{sec:theory}.

\subsection{Simple Patterns and Matching}
\label{sec:simple-patterns}
Because \pypm{} allows programmers to define operators, \core{} is parameterized over the set of operators in question.
In particular, fix a set of operators $\Sigma$, and $\arity{\cdot} : \Sigma \to \mathbb{N}$ denote their arities.
We use $f,g$ to range over elements of $\Sigma$. The set of terms is then inductively generated by correctly-saturated applications of elements of $\Sigma$ to a list of terms. Note that as usual, constants are function symbols of arity $0$.
Patterns are then either (a) pattern variables $x$, (b) correctly-saturated applications of function symbols to a list of patterns $f(p_1,\dots,p_n)$, or (c) pattern alternates, written $p \| p'$.
Figure~\ref{fig:terms-and-patterns} gives the grammar for both terms and basic patterns.

\begin{figure}
\begin{align*}
    t &::= f(t_1,\dots,t_n) \qquad (\arity{f} = n)\\
    p &::= x \\
    &\mid f(p_1,\dots,p_n) \qquad (\arity{f} = n)\\
    &\mid p \| p'
\end{align*}
    \caption{Grammar of Terms and Basic Patterns}
    \label{fig:terms-and-patterns}
\end{figure}

\subsubsection{Declarative Semantics}
The declarative semantics describes which terms match which patterns, under which substitutions.
The judgment form of this semantics is $\matchNOFUN{p}{t}{\theta}$~,~which we read as ``the term $t$ matches the pattern $p$ with substitution $\theta$''.
The $\theta$ position of this relation is a finite map from variables to terms.
In the logic programming analogy, the substitution $\theta$ can be thought of as a witness to the match: it proves that $t$ matches $p$ by exhibiting a unifier.

The rules of the declarative semantics are defined in Figure~\ref{fig:kernel-decsem} and proceed by cases on the pattern $p$ being matched.
The \rulename{P-Var} rule states that a variable $x$ matches against $t$ with substitution $\theta$ if $\theta$ maps $x$ to $t$.
Notationally, we write $\theta(x) \mapsto t$ to mean ``the substitution $\theta$ maps $x$ to $t$''.
For function matching, the rule \rulename{P-Fun} says that a pattern $f(p_1,\dots,p_n)$ matches with $\theta$ a term if the term looks like $f(t_1,\dots,t_n)$, and for all $i$, $p_i$ matches $t_i$ with $\theta$.
Lastly, the rules \rulename{P-Alt-1} and \rulename{P-Alt-2} says that a pattern alternate $p \| p'$ matches $t$ if either $p$ or $p'$ matches $t$.
We note that in some formalizations of pattern matching and logic programming \cite{egg,efficient-e-matching}, all closed/ground patterns are terms, and so ``the term $t$ matches against $p$ with substitution $\theta$''
can be instead phrased as $p[\theta] = t$. Because of pattern alternates (and other features we'll see later), not all ground patterns are terms, and so we must specify what it means to ``match''
with an inductive relation.

\begin{figure}
    \begin{mathpar}
        \infer[P-Var]{\theta(x) \mapsto t}{\matchNOFUN{x}{t}{\theta}}
        
        \infer[P-Fun]{
            \forall i. \left(\matchNOFUN{p_i}{t_i}{\theta}\right)\\
        }{
            \matchNOFUN{f(p_1,\dots,p_n)}{f(t_1,\dots,t_n)}{\theta}
        }
        \\
        \infer[P-Alt-1]{
            \matchNOFUN{p}{t}{\theta}
        }{
            \matchNOFUN{p \| p'}{t}{\theta}
        }

        \infer[P-Alt-2]{
            \matchNOFUN{p'}{t}{\theta}
        }{
            \matchNOFUN{p \| p'}{t}{\theta}
        }
        
        \end{mathpar}
    \caption{Declarative Semantics of Basic Patterns}
    \label{fig:kernel-decsem}
\end{figure}

\begin{theorem}[Match Weakening]
    If $\matchNOFUN{p}{t}{\theta}$ and $\theta \subseteq \theta'$ then $\matchNOFUN{p}{t}{\theta'}$
\end{theorem}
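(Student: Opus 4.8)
The plan is to proceed by induction on the derivation of the matching judgment $\matchNOFUN{p}{t}{\theta}$. Since the relation is defined by an inductive family of four rules (\rulename{P-Var}, \rulename{P-Fun}, \rulename{P-Alt-1}, \rulename{P-Alt-2}), structural induction on this derivation gives exactly one case per rule, and in each case I get an induction hypothesis for every premise that is itself a matching judgment. The goal in each case is to re-derive the conclusion with $\theta$ replaced by the larger map $\theta'$, where $\theta \subseteq \theta'$ means $\theta'$ agrees with $\theta$ on every variable in the domain of $\theta$.

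**First I would** handle the base case \rulename{P-Var}. Here the derivation concludes $\matchNOFUN{x}{t}{\theta}$ from the premise $\theta(x) \mapsto t$, i.e.\ $x \in \dom(\theta)$ and $\theta(x) = t$. Since $\theta \subseteq \theta'$ and $x$ is in the domain of $\theta$, we get $\theta'(x) = \theta(x) = t$, so $\theta'(x) \mapsto t$ and \rulename{P-Var} re-derives $\matchNOFUN{x}{t}{\theta'}$. This is the only case where the definition of $\subseteq$ is actually used, and it is precisely the reason the hypothesis is $\theta \subseteq \theta'$ rather than something weaker. For the \rulename{P-Fun} case, the term has the form $f(t_1,\dots,t_n)$ and for each $i$ we have a subderivation of $\matchNOFUN{p_i}{t_i}{\theta}$; the induction hypothesis applied to each (together with the same $\theta \subseteq \theta'$) yields $\matchNOFUN{p_i}{t_i}{\theta'}$ for all $i$, and \rulename{P-Fun} reassembles these into $\matchNOFUN{f(p_1,\dots,p_n)}{f(t_1,\dots,t_n)}{\theta'}$. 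The two alternate cases are immediate: from the subderivation of $\matchNOFUN{p}{t}{\theta}$ (resp.\ $\matchNOFUN{p'}{t}{\theta}$) the induction hypothesis gives the corresponding judgment at $\theta'$, and re-applying \rulename{P-Alt-1} (resp.\ \rulename{P-Alt-2}) concludes $\matchNOFUN{p \| p'}{t}{\theta'}$.

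**The one subtlety to pin down** is the exact meaning of $\theta \subseteq \theta'$ and that it is preserved unchanged across the recursive calls. Because every rule passes the \emph{same} substitution down to all of its premises (matching here is a pure check against a fixed $\theta$, with no binding or extension of the substitution along the way), the containment hypothesis needed by the induction hypothesis in each inductive case is literally the one we started with, so no strengthening of the statement is required and there is no real obstacle. The only place the proof could go wrong is if $\subseteq$ were read as mere graph inclusion without the functionality guarantee, but since both $\theta$ and $\theta'$ are finite maps, agreement on $\dom(\theta)$ follows, and the \rulename{P-Var} step goes through cleanly. I would therefore expect the entire argument to be a routine induction, with the \rulename{P-Var} base case being the only step that carries any content.
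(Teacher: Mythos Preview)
Your argument is correct: a routine structural induction on the derivation of $\matchNOFUN{p}{t}{\theta}$, with the only contentful step being \rulename{P-Var}, where $\theta \subseteq \theta'$ is used to transport the binding. The paper itself does not spell out a proof of this theorem in the text; it is stated without proof (the development is mechanized in Coq), so there is no alternative approach to compare against, and your induction is exactly the expected one.
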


\subsubsection{Algorithmic Semantics}
The algorithmic semantics describes how to determine if a term $t$ matches a
pattern $p$, and how to build the substitution if a match exists.  This
semantics is a state transition system which builds up a substitution over time, binding variables to subterms as it traverses
the pattern and term being matched. Because of the nondeterministic nature of pattern alternates, the state machine
may need to make guesses that it can un-make if it decides they were the wrong choice. To this end,
the algorithmic semantics is backtracking-based, keeping track of a stack of saved program points to jump to if a conflict is encountered.

Formally, the state of the machine is one of three options. There are two terminal states, $\successNOFUN{\theta}$ recording a successful match terminating with $\theta$,
or $\failure$ indicating a failure to match. The most important state is $\runningNOFUN{\theta}{stk}{k}$, which records the current substitution $\theta$,
a stack $stk$ for backtracking, and a continuation $k$ of actions to take before the matching can be considered successful.

A continuation $k$ is list of \emph{actions}, which (for the moment), are directives $\actDoMatch{p}{t}$ to try matching $p$ against $t$.
To run the algorithmic semantics on $p$ and $t$, we simply set the continuation to $\left[\actDoMatch{p}{t}\right]$. As the semantics runs,
the continuation grows and shrinks as matching obligations are generated and discharged.
The stack $stk$ is a list of backtrack nodes $\btNODENOFUN{\theta}{k}$, recording a saved substitution and a continuation at a choice point.

The state transition system that defines the algorithmic semantics is given by a
small-step relation $\stateTrans{st}{st'}$ whose rules can be found in
Figure~\ref{fig:kernel-algsem}.
The step relation inspects the head of the continuation and transitions accordingly.
If the head of the continuation is a command to match a variable $x$ against a term $t$, there are three options.
First, in \rulename{ST-Match-Var-Bind}, the current substitution $\theta$ does not yet have a binding for $x$, in which case
we step to a state where the substitution has been extended by $\{x \mapsto t\}$.
Next, if $\theta$ already maps $x$ to $t$, \rulename{ST-Match-Var-Bound} says to simply continue.
Last, in \rulename{ST-Match-Var-Conflict}, $\theta$ maps $x$ to a term other than $t$, and so we cannot proceed further along this course.
In this case, if the backtracking stack is empty, we step to the $\failure$ state. Otherwise we pop the backtracking stack
and replace the current substitution and continuation with the saved state.
This dispatch (determining if the stack is empty and stepping to the correct state) is handled by a metafunction $\backtrack{stk}$
to condense the two options into one rule.

If the head of the continuation says we must match a function pattern $f(p_1,\dots,p_n)$ against a term $t$,
we inspect the term. If $t$ is of the form $g(t_1,\dots,t_m)$ for some $f \neq g$ or if $m \neq n$, then we attempt to backtrack with \rulename{ST-Match-Fun-Conflict}.
Otherwise, $t = f(t_1,\dots,t_n)$ and so \rulename{ST-Match-Fun} applies and we add the instructions to match $p_i$ with $t_i$ to the continuation.

Last, if the head of the continuation says we must match a pattern alternate $p
\| p'$ against $t$, \rulename{ST-Match-Alt} applies. In this rule we try the
first alternate $p$, stepping to a state where the head of the continuation says
to match $p$ against $t$. We also push onto the backtracking stack,
recording the current substitution and the current continuation with the second alternative $p'$ prepended.
This ensures that if matching $p$ against $t$ fails (or generates substitutions that conflict with the current continuation),
we can try the other route.

Note that because this algorithmic treatment of pattern alternates is deterministic and left-eager,
the algorithmic semantics is necessarily not complete with respect to the declarative semantics: matching $t = f(c_1,c_2)$ against $p = f(x,y) \| f(y,x)$
with the algorithmic semantics will only ever yield the substitution $\{x \mapsto c_1,y\mapsto c_2\}$, despite the fact that the judgment
$\matchNOFUN{p}{t}{\{x \mapsto c_2,y\mapsto c_2\}}$ is also derivable.
However the algorithm is sound in the sense that if the algorithmic semantics runs on $p$ and $t$ to $\successNOFUN{\theta}$
then $\matchNOFUN{p}{t}{\theta}$, and if it runs to $\failure$, no such $\theta$ exists.

\begin{figure*}
    \begin{align*}
        a &::= \actDoMatch{p}{t}\\
        k &::= \contNil \mid \contCons{a}{k}\\
        stk &::= \stkEmp \mid \btNOFUN{stk}{\theta}{k}\\
        st &::= \successNOFUN{\theta} \mid \failure \mid \runningNOFUN{\theta}{stk}{k}\\
        \\
    \backtrack{\stkEmp} &= \failure\\
    \backtrack{\btNOFUN{stk}{\theta}{k}} &= \runningNOFUN{\theta}{stk}{k}
    \end{align*}

    \begin{mathpar}
        
    \infer[ST-Success]{ }{
        \stateTrans{
            \runningNOFUN{\theta}{stk}{\contNil}
        }{
            \successNOFUN{\theta}
        }
    }

    \infer[ST-Match-Var-Bind]{
        \neg \exists t'. \left(\theta(x) \mapsto t'\right)
    }{
        \stateTrans{
            \runningNOFUN{\theta}{stk}{\contCons{\actDoMatch{x}{t}}{k}}
        }{
            \runningNOFUN{\theta \cup \{x \mapsto t\}}{stk}{k}
        }
    }

    \infer[ST-Match-Var-Bound]{
        \theta(x) \mapsto t 
    }{
        \stateTrans{
            \runningNOFUN{\theta}{stk}{\contCons{\actDoMatch{x}{t}}{k}}
        }{
            \runningNOFUN{\theta}{stk}{k}
        }
    }

    \infer[ST-Match-Var-Conflict]{
        \theta(x) \mapsto t'\\
        t' \neq t
    }{
        \stateTrans{
            \runningNOFUN{\theta}{stk}{\contCons{\actDoMatch{x}{t}}{k}}
        }{
            \backtrack{stk}
        }
    }

    \infer[ST-Match-Fun]{
        k' = \left[\actDoMatch{p_1}{t_1},\dots,\actDoMatch{p_n}{t_n}\right]\\
    }{
        \stateTrans{
            \runningNOFUN{\theta}{stk}{\contCons{\actDoMatch{f(p_1,\dots,p_n)}{f(t_1,\dots,t_n)}}{k}}
        }{
            \runningNOFUN{\theta}{stk}{k' \texttt{++} k}
        }
    }

    \infer[ST-Match-Fun-Conflict]{
        f \neq g \vee m \neq n\\
    }{
        \stateTrans{
            \runningNOFUN{\theta}{stk}{\contCons{\actDoMatch{f(p_1,\dots,p_n)}{g(t_1,\dots,t_m)}}{k}}
        }{
            \backtrack{stk}
        }
    }

    \infer[ST-Match-Alt]{
        stk' = \btNOFUN{stk}{\theta}{\contCons{\actDoMatch{p'}{t}}{k}}
    }{
        \stateTrans{
            \runningNOFUN{\theta}{stk}{\contCons{\actDoMatch{p \| p'}{t}}{k}}
        }{
            \runningNOFUN{\theta}{stk'}{\contCons{\actDoMatch{p}{t}}{k}}
        }
    }

    \end{mathpar}

    \caption{Algorithmic Semantics of Basic Patterns}
    \label{fig:kernel-algsem}
\end{figure*}

\begin{theorem}[Algorithmic Soundness]
    If $$\runningNOFUN{\emptyset}{\stkEmp}{\left[\actDoMatch{p}{t}\right]} \mapsto^+ \successNOFUN{\theta}$$
    then $\matchNOFUN{p}{t}{\theta}$. If
    $$\runningNOFUN{\emptyset}{\stkEmp}{\left[\actDoMatch{p}{t}\right]} \mapsto^+ \failure$$
    then there are no $\theta$ such that $\matchNOFUN{p}{t}{\theta}$
\end{theorem}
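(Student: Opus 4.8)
The plan is to set up an invariant relating each machine state to the set of substitutions that are declaratively \emph{consistent} with it, and then to read off both halves of the theorem from how this set evolves under $\stateTrans{\cdot}{\cdot}$. Concretely, for a substitution $\theta$ and a continuation $k = \left[\actDoMatch{p_1}{t_1},\dots,\actDoMatch{p_m}{t_m}\right]$, let $C(\theta,k)$ be the set of $\theta'$ that \emph{complete} $(\theta,k)$, meaning $\theta \subseteq \theta'$ and $\matchNOFUN{p_i}{t_i}{\theta'}$ for every $i$. Extend this to states by collecting the current continuation together with every branch saved on the stack:
\begin{align*}
\mathsf{Ans}(\failure) &= \emptyset, \qquad \mathsf{Ans}(\successNOFUN{\theta}) = \{\theta' \mid \theta \subseteq \theta'\}, \\
\mathsf{Ans}(\runningNOFUN{\theta}{stk}{k}) &= C(\theta,k) \cup \mathsf{Ans}(\backtrack{stk}).
\end{align*}
Since $\mathsf{Ans}(\backtrack{\stkEmp}) = \mathsf{Ans}(\failure) = \emptyset$, this recurses on the stack length and unfolds to the union of $C(\theta_j,k_j)$ over the current branch and all stacked branches $\btNODENOFUN{\theta_j}{k_j}$. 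The point of the definition is that the two terminal states pin down the answer we care about: $\mathsf{Ans}(\successNOFUN{\theta})$ contains $\theta$, while $\mathsf{Ans}(\failure)$ is empty; and for the initial state $\mathsf{Ans}(\runningNOFUN{\emptyset}{\stkEmp}{[\actDoMatch{p}{t}]}) = \{\theta' \mid \matchNOFUN{p}{t}{\theta'}\}$.

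The core of the argument is two single-step monotonicity lemmas, each proved by case analysis on the firing rule. First I would prove \emph{backward inclusion}: whenever $\stateTrans{st}{st'}$, $\mathsf{Ans}(st') \subseteq \mathsf{Ans}(st)$. Here the declarative rules are used as introduction forms --- \rulename{P-Fun} reassembles a function match from its componentwise obligations in \rulename{ST-Match-Fun}, \rulename{P-Alt-1} and \rulename{P-Alt-2} lift the chosen and the pushed alternate in \rulename{ST-Match-Alt}, and \rulename{P-Var} handles the variable rules (the $\subseteq$ bookkeeping ensures that binding $x \mapsto t$ is matched by the \rulename{P-Var} requirement $\theta'(x) \mapsto t$). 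Second I would prove \emph{forward inclusion for every rule except} \rulename{ST-Success}: there $\mathsf{Ans}(st) \subseteq \mathsf{Ans}(st')$. This uses the declarative rules as inversion principles, and crucially the conflict side conditions: in \rulename{ST-Match-Var-Conflict} and \rulename{ST-Match-Fun-Conflict} the discarded current branch has an empty completion set (no $\theta'$ can both extend $\theta$ and match when the head symbols or arities disagree, or when $\theta$ already binds $x$ to a different term), so dropping it loses nothing; in \rulename{ST-Match-Fun} and \rulename{ST-Match-Alt}, inversion of \rulename{P-Fun} / \rulename{P-Alt} routes every completing $\theta'$ into the expanded continuation or the freshly pushed stack branch.

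With these lemmas the two conclusions follow by composition. For the success direction, a run $\runningNOFUN{\emptyset}{\stkEmp}{[\actDoMatch{p}{t}]} \mapsto^+ \successNOFUN{\theta}$ lets me chain backward inclusion along every step to get $\mathsf{Ans}(\successNOFUN{\theta}) \subseteq \mathsf{Ans}(\runningNOFUN{\emptyset}{\stkEmp}{[\actDoMatch{p}{t}]})$; since $\theta \in \mathsf{Ans}(\successNOFUN{\theta})$, we conclude $\matchNOFUN{p}{t}{\theta}$. For the failure direction, the key observation is that a run ending in $\failure$ never fires \rulename{ST-Success} (which lands in the terminal state $\successNOFUN{\cdot}$, from which no transition to $\failure$ is possible), so forward inclusion applies at \emph{every} step of such a run; chaining it gives $\mathsf{Ans}(\runningNOFUN{\emptyset}{\stkEmp}{[\actDoMatch{p}{t}]}) \subseteq \mathsf{Ans}(\failure) = \emptyset$, i.e.\ no $\theta$ satisfies $\matchNOFUN{p}{t}{\theta}$.

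The main obstacle is calibrating $\mathsf{Ans}$ so that backward inclusion holds at \emph{every} rule while forward inclusion holds at every rule \emph{except} \rulename{ST-Success}. Forward inclusion genuinely fails at \rulename{ST-Success}: committing to a success discards the alternative branches still recorded on the stack, and those can be bona fide declarative matches --- this is precisely the incompleteness noted after the rules (the $f(x,y) \| f(y,x)$ example). What makes the proof go through is that this asymmetry is harmless for soundness, because a successful run only needs backward inclusion and a failing run never passes through a success state. The two spots demanding real care are the $\subseteq$ accounting in the variable rules and the verification that the conflict side conditions make the abandoned branch's completion set empty; the remaining cases are routine rule-by-rule checking against Figure~\ref{fig:kernel-decsem}.
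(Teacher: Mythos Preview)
Your argument is sound for the basic-pattern fragment the theorem is stated over, and the invariant $\mathsf{Ans}$ together with the asymmetric pair of inclusion lemmas is exactly the right shape. The paper itself does not give a paper-level proof at all: it simply points to the Coq development (\texttt{succ\_sound} and \texttt{fail\_sound} in \texttt{Proof.v}) and stops, so there is no in-text argument to compare your route against. What can be said is that your decomposition is the natural one for a mechanization as well --- an invariant on states, preserved (in one direction or the other) by single steps, composed along a run --- so it is entirely plausible that the Coq proof is organized the same way.

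Two small remarks. First, your handling of the \rulename{ST-Success} asymmetry is the crux and you have it right: backward inclusion is needed at every step for the success half, while forward inclusion is needed only along failing runs, which by terminality of $\successNOFUN{\cdot}$ never fire \rulename{ST-Success}. Second, although the theorem as placed in the paper covers only the basic patterns of Figure~\ref{fig:kernel-algsem}, the paper's proof pointer is to a soundness result for all of \core{}. Your invariant extends cleanly to the additional action forms $\actCheckGuard{g}$, $\actUnbind{x}$, $\actBackMatch{x}{p}$ by adding the obvious clauses to $C(\theta,k)$ (respectively: $\scott{g[\theta']} = \texttt{True}$, $x \in \dom\theta'$, and $\matchNOFUN{p}{\theta'(x)}{\theta'}$), and the rule-by-rule checks go through for the rules in Figures~\ref{fig:all-algsem-1}--\ref{fig:all-algsem-2} with the same pattern of introduction/inversion against Figure~\ref{fig:all-decsem}. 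If you want your write-up to match the scope of the mechanized claim, that extension is worth spelling out; for the theorem exactly as stated, what you have is complete.
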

\begin{proof}
    A mechanized proof of soundness of \core{} can be found in the supplementary materials in \texttt{Proof.v},
    theorems \texttt{succ\_sound} and \texttt{fail\_sound}.
\end{proof}

\subsection{Guarded Patterns}

\begin{figure}
    \begin{align*}
        e &::= t.\alpha \mid x.\alpha \mid e + e \mid e - e \mid \dots \\
        g &::= e = e' \mid e < e' \mid g \wedge g' \mid g \vee g' \mid \neg g\\
        p &::= \dots \mid \guardPat{p}{g}
    \end{align*}

    \begin{mathpar}
     \inferrule[P-Guard]{
            \matchNOFUN{p}{t}{\theta}\\
            \scott{g[\theta]} = \texttt{True}
        }{
            \matchNOFUN{\guardPat{p}{g}}{t}{\theta}
        }   
    \end{mathpar}

    \caption{Guarded Pattern Syntax and Declarative Semantics}
    \label{fig:guard-pat-syn}
\end{figure}

The analogue of \pypm{}'s \texttt{assert} feature in \core{} is \emph{guarded patterns}
$\guardPat{p}{g}$.  In short, $\guardPat{p}{g}$ matches against $t$ under $\theta$ if $p$ matches against $t$ under $\theta$, and the
assignment of variables in $\theta$ makes the guard term $g$ true, modelling assertion checks.

Guards, written $g$, are boolean constraints over arithmetic expressions. Those arithmetic expressions
can include terms like $x.\alpha$, where $x$ is a pattern variable, and $\alpha$ is an
\emph{attribute}. While \pypm{} defines a concrete set of tensor-specific attributes like \texttt{.shape} \texttt{.rank}
and \texttt{.eltType},
we leave \core{} abstract and instead require a fixed set of attributes $\mathcal{A}$, along with an 
interpretation function $\scott{\cdot} : \mathcal{A} \to \texttt{Term} \rightharpoonup \mathbb{N}$ defining their meaning on terms.
We then lift this function to a $\mathbb{N}$-valued denotation on closed arithmetic expressions by $\scott{t.\alpha} = \scott{\alpha}(t)$ (and compositionally otherwise),
and subsequently lift it to boolean-valued denotation on closed guard terms.
Figure~\ref{fig:guard-pat-syn} shows how we extends the syntax of patterns to add guard terms,
defines the syntax of guard terms and expressions, and defines the declarative semantics of guarded patterns.
The rule \rulename{P-Guard} simply says that $\guardPat{p}{g}$ matches against $t$ if $p$ matches against $t$,
and if the substitution instance $g[\theta]$ is deemed \texttt{True} by the guard semantics.

The algorithmic semantics of guarded patterns can be found in \theappendix{}.

\subsection{Existiential Variables and Match Constraints}
\pypm{}'s \texttt{var()} and \texttt{<=} (match constraint) features also have corresponding constructs in \core{}.
The former is modeled by \emph{existential patterns} $\exists x.p$, which bring a pattern variable $x$ into scope.
The latter is modeled by a second kind of guarded pattern for match constraints, which we write as $\backMatch{p}{x}{p'}$.
The semantics of both constructs are straightforward. $\exists x.p$ matches against $t$ under $\theta$ if there exists some $t'$
such that $p$ matches against $t$ under $\theta \cup \{x \mapsto t'\}$.
Meanwhile, $\backMatch{p}{x}{p'}$ matches against $t$ under $\theta$ if $p$ matches against $t$, and $\theta(x)$ matches against $p'$.
These two constructs highlight the highly nondeterministic nature of the \core{} semantics, and demonstrate that our algorithmic semantics is nontrivial. Both \rulename{P-Exists}
and \rulename{P-MatchConstr} invent a term $t'$ from nowhwere, and so are obviously not directly implementable.

\begin{figure}
    \begin{align*}
        p &::= \dots \mid \exists x.p \mid \backMatch{p}{x}{p'}
    \end{align*}

    \begin{mathpar}
     \inferrule[P-Exists]{
        \matchNOFUN{p}{t}{\left(\theta \cup \{x \mapsto t'\}\right)}
    }{
        \matchNOFUN{\exists x.p}{t}{\theta}
    }   

    \inferrule[P-MatchConstr]{
        \matchNOFUN{p}{t}{\theta}\\
        \theta(x) \mapsto t'\\
        \matchNOFUN{p'}{t'}{\theta}
    }{
        \matchNOFUN{\backMatch{p}{x}{p'}}{t}{\theta}
    }
    \end{mathpar}

    \caption{Syntax and Declarative Semantics of Existential Variables and Match Constraints}
    \label{fig:ex-backmatch-syn}
\end{figure}

The algorithmic semantics for existential patterns and match constraints can be found in \theappendix{}.

\subsection{Function Variables}
In Section~\ref{sec:examples}, we saw a \pypm{} pattern that matched a use of an arbitrary binary operation. So far, \core{} has only
allowed for patterns like $f(p_1,\dots,p_n)$, where $f \in \Sigma$ is some particular function symbol. In this section, we handle this use case
by introducing a second kind of variable called \emph{function variables} that range over function symbols rather than whole terms.
Then, by using a pattern variable $F$, we can write the pattern $F(x,y)$ to match a use of any binary operation.
Formally, we extend the grammar of patterns so that if $F$ is a function variable, we can form the pattern $F(p_1,\dots,p_n)$.
Function variables have the same semantics as regular pattern variables, but simply bind function symbols rather than terms. For example, the pattern $F(F(x))$ matches terms that look like a unary function applied to itself twice.
As discussed when function patterns were introduced, this feature takes \core{} beyond the standard logical strength of either (a) pattern matching in functional languages, or (b) logic programming.

As such, this feature requires a slight modification of what matching means.
Witnessing that $t$ matches $p$ does not just require a substitution describing which pattern variables map to which terms, it also requires a map explaining which function variables get mapped to which symbols.
To this end, we extend the matching judgment of the declarative semantics to include a function substitution $\phi$, a finite map from function variables to elements of the set $\Sigma$.
We write this new judgment as $\match{p}{t}{\theta}{\phi}$. All of the previously-described rules of the declarative semantics do not touch this component; it simply goes along for the ride. To see the full declarative rules of \core{}
rendered with the function substitution included, see \theappendix{}.
The declarative semantics rule for function variable application patterns \rulename{P-Fun-Var} is straightforward: for $f(t_1,\dots,t_n)$ to match $F(p_1,\dots,p_n)$,
each of the $t_i$ must match $p_i$, and $\phi$ must map $F$ to $f$.

The algorithmic semantics undergoes a similar modification where we add a function substitution $\phi$ to the running state, success state, and to backtracking nodes on the stack.
The algorithmic rules for matching function variable patterns behave like a combination of the \rulename{ST-Match-Fun} rule for match actions of standard function patterns and the \rulename{ST-Match-Var-*} rules for
matching a variable. The full algorithmic semantics for function variables can be found in \theappendix{}.

\subsection{Recursive Patterns}

\pypm{}'s pattern recursion is modeled in \core{} by fixpoint definitions,
written $\mu P(x).p$. A pattern of this form may refer to the ``recursive pattern call''
$P(x)$ anywhere within the body pattern $p$. To match a term $t$ against a recursive pattern, we simply unfold the recursion
one step, substituting $\mu P(x).p$ for all ocurrences of $P(x)$ inside $p$ to get a new term $p'$,
and then match $t$ against $p'$.

Note that this includes the possibility of nonterminating patterns, as $\mu P(x). P(x)$
immediately unfolds to itself. A full formal treatment of the declarative and algorithmic semantics
of recursive patterns can be found in \theappendix{}.








\section{Use Of \pypm{} and Evaluation}
\label{sec:eval}

In this section we review and evaluate two uses of \pypm{} --- its primary use case in optimization and
instruction selection, and a more novel use case we call directed graph partitioning.

\subsection{Optimization}

To demonstrate \pypm{}'s primary use case --- rule-based optimization and instruction selection in AI models --- we'll show how rewriting by hand-crafted
patterns for common operatons can produce significant performance improvements across the board in a wide range of models.

All of the most advanced large language models are based on the transformer architecture, which
has \emph{multi-head attention} (MHA) as its most basic building block \cite{attention}. MHA is the operation
$\texttt{MHA}(Q,K,V) = \text{softmax}\left(\alpha QK^T\right)V$, where the softmax is applied row-wise and $\alpha$ is a constant.
Often, this operation is implemented
verbatim, as three matrix products, a transpose, and a row-wise softmax. For an operation that is so frequently used during transformer model inference,
this is clearly sub-optimal. Instead, we have developed a custom fused kernel that efficiently implemenents the forward pass of MHA on a GPU.
The supplementary materials include a simplified version of a pattern that matches instances of MHA in tensor computations, and a rule that replaces
them with a call to fused multi-head attention (FMHA).

Because of the structure of neural networks, another common operation in inference is applying a pointwise activation function like RELU
or GELU \cite{gelu} to the result of a matrix multiplication or convolution operation. To this end, we have developed a GEMM matrix multiplication operation
can include an \emph{epilog}: a pointwise activation function fused into the implemenation of the matrix multiplication~\footnote{This kernel is similar to, but not the same as, the cuBLAS GEMM kernel discussed in Section~\ref{sec:intro}.}. Of course,
as discussed in Section~\ref{sec:intro}, targeting this sort of operation is complicated and requires checking that the particular GEMM instance has size and shape paramters supported by the kernel.
The supplementary materials also show a (once again simplified) pattern that matches fusable epilogs, and a rule that rewrites them to uses of the library operation.

We benchmark these optimizations with two different benchmarking suites. The first is Huggingface's transformers benchmark \cite{hf-transformers},
which tests the performance of inference in a wide range of pre-trained transformer models. The second is the TorchVision (TV) benchmark, which
tests the performance of inference in a large set of pre-trained computer vision models. Both benchmarks include scripts to run a large number of 
inference passes through all of their models, and then report statistics about performance. This setup is intended to benchmark the same implementations
of a wide range of models on different hardware. Here, we use repurpose these benchmarking scripts: we keep the hardware \emph{fixed} \footnote{All benchmarks are run on a machine with a single Nvidia RTX A6000 GPU.}
and instead vary the models themselves by optimizing them using \dlcb{} with different \pypm{} rewrites enabled.

In both cases the model benchmarks are written as code in PyTorch \cite{pytorch}, a popular Python framework for writing AI models. Our benchmarking harness first ingests the models
and uses PyTorch to compile them to PyTorch's internal IR. We then take the PyTorch IR code, transform it to \dlcb{}'s internal IR, run the \dlcb{} compiler, and dump the result back to PyTorch IR.
Because \dlcb{} only understands a (large) subset of PyTorch operators, unfamiliar operators are represented as opaque nodes, and cannot be matched by the \pypm{} pass.
Last, we pass the optimized PyTorch code back to the HuggingFace and TorchVision benchmark scripts. These benchmark scripts (written by the HuggingFace and PyTorch teams, respectively) repeatedly
run the models in forward (inference) mode, and report an average per-iteration wall-clock time.

We use \dlcb{} to compile each model in the two benchmarks four ways. Once with the FMHA and Epilog optimizations disabled,
once each with FMHA and Epilog only, and once with both optimizations enabled simultaneously.
The results of the benchmarks are shown in Figures~\ref{fig:hf-benchmark}~and~\ref{fig:tv-benchmark}.
These figures show histograms, reporting the distributions of \emph{relative speedups} (when compared to \dlcb{} with neither optimization enabled.) across all models achieved
under each set of optimizations. 

All models were compiled with \dlcb{} and subsequently benchmarked on a machine with an AMD Ryzen 7 3700X 8-Core processor clocked at 3630MHz, 128GB RAM, and an Nvidia A6000 Ampere GPU.
The \dlcb{} compiler itself runs entirely on the CPU. 
Figures~\ref{fig:hf-compile-time}~and~\ref{fig:tv-compile-time} contain a graphs of the time spent running the pattern matcher
during \dlcb{} evaluation as a function of number of matches that are found in a model.
While the number of matches is the primary determiner of compile-time cost, time spent matching also depends on the size of the AST of each model.
Even in cases where matches are found, the pattern matcher must still traverse the entire model, potentially spending time checking ``partial matches'' that don't end up
actually matching the pattern. This is borne out by the Epilog matching costs: Even when there are none, 
the implementation takes 2 orders of magnitude longer looking for Epilog matches than MHA matches 
because there are \emph{many} more matrix multiplies in all of the HF and TV models than potential MHA matches.
However, we find that this compile-time cost is minimal across the board, with none of the matching passes --- between both the FMHA and Epilog patterns on the HF and TV models --- ever taking longer than 3 seconds
to run to fixpoint on a particular model.




\begin{figure}
    \includegraphics[width=0.4\textwidth]{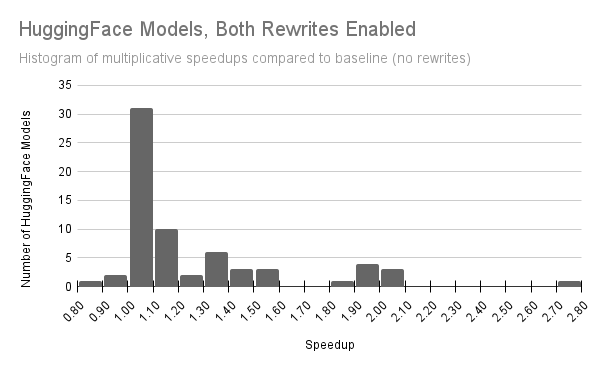}
    \includegraphics[width=0.4\textwidth]{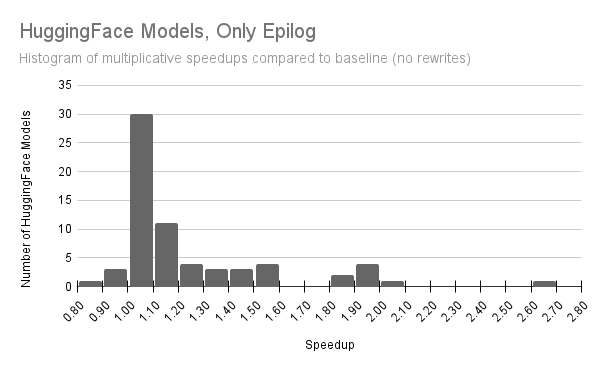}
    \includegraphics[width=0.4\textwidth]{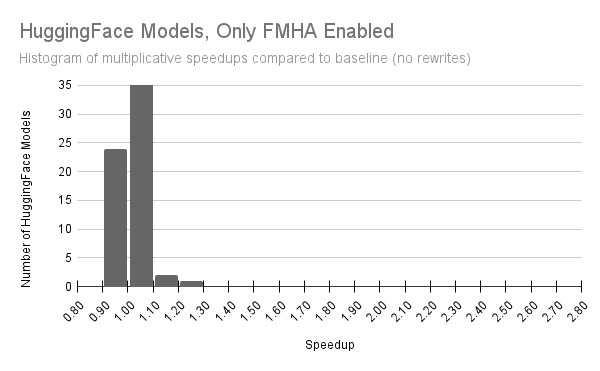}
    \caption{HuggingFace Benchmarks}
    \label{fig:hf-benchmark}
\end{figure}

\begin{figure}
    \includegraphics[width=0.4\textwidth]{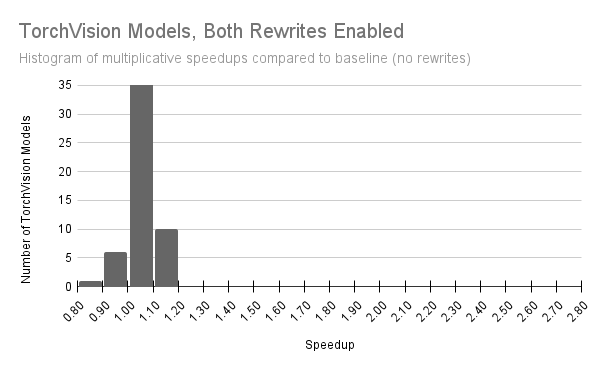}
    \includegraphics[width=0.4\textwidth]{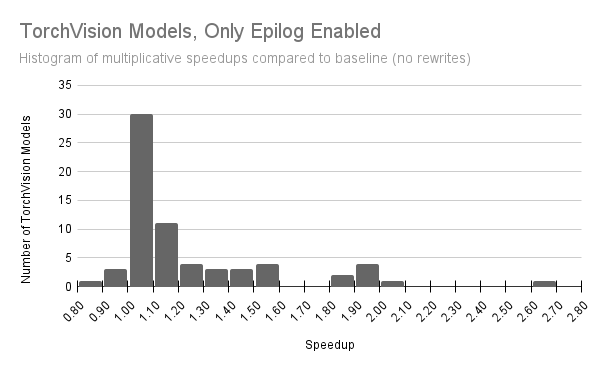}
    \includegraphics[width=0.4\textwidth]{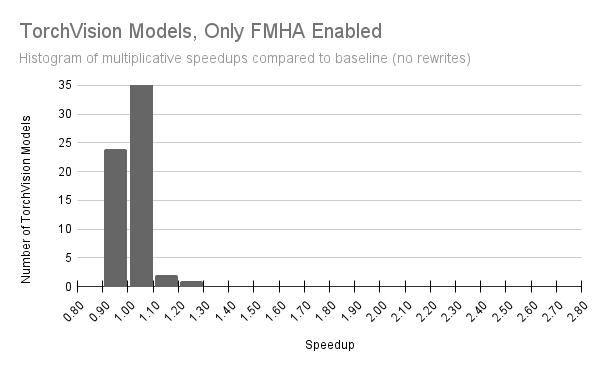}
    \caption{TorchVision Benchmarks}
    \label{fig:tv-benchmark}
\end{figure}

\begin{figure}
    \includegraphics[width=0.4\textwidth]{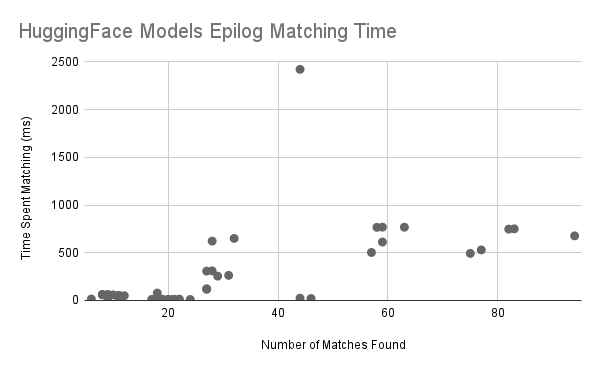}
    \includegraphics[width=0.4\textwidth]{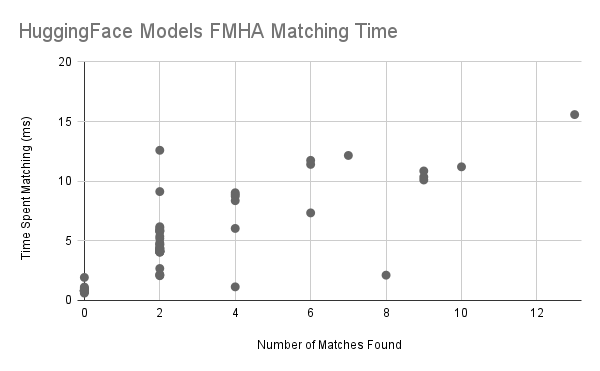}
    \caption{HuggingFace Compile Time Cost}
    \label{fig:hf-compile-time}
\end{figure}

\begin{figure}
    \includegraphics[width=0.4\textwidth]{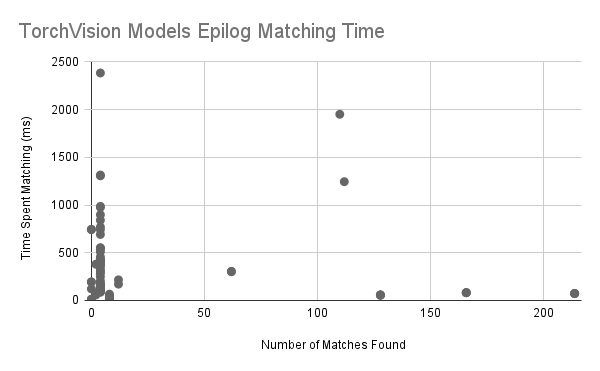}
    \includegraphics[width=0.4\textwidth]{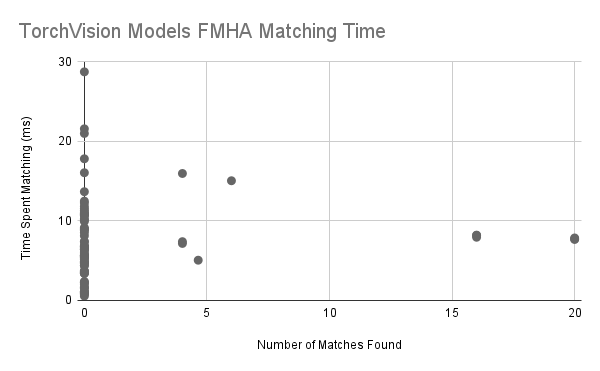}
    \caption{TorchVision Compile Time Cost}
    \label{fig:tv-compile-time}
\end{figure}

We note that it is not a contribution of this work --- nor is it particularly surprising ---
to demonstrate that replacing inefficient implementations with bespoke fused ones yields a performance improvement.
The point of these benchmarks is instead to demonstrate that \pypm{} is sufficiently expressive to 
be used by experts in handcrafting effective optimizations.

\subsection{Directed Graph Partitioning}
So far we have described \pypm{} alongside its primary use case, namely to replace subgraphs of tensor computations with hand-crafted kernels in cases where optimizing AI compilers cannot yet recognize the opportunities
to do so. While we have shown \pypm{} to be effective in this use case, it does have one primary drawback which we freely recognize: hand-crafted replacements msut be available for this technique to apply! While \pypm{} features like pattern alternates
and guarded rules simplify the challenge of crafting replacements, it may not be feasible to manually write rules for very complex patterns which \emph{do} have optimized replacments.
Figure~\ref{fig:pw-subgraph} shows an example of such a pattern: \texttt{MatMulEpilog} describes subgraphs that look like a matrix multiply followed by some number of pointwise operations. An optimizing AI compiler should be able to
fuse such a subgraph into a single kernel that both performs the matrix multiplication and also applies the all of the pointwise operations. However, it's not feasible to handwrite a \pypm{} rule as the right hand side --- there are simply 
too many options. A natural idea in this scenario is to create the right hand side rule ``just in
time''. When we find a match, we can pass the subgraph off to an AI compiler
that \emph{can} build the fused kernel, and use the resulting blob as the replacement.
We call this concept \emph{Directed Graph Partitioning}. By using \pypm{}
patterns, \dlcb{} can partition a computation graph into subgraphs that we know
can be optimized, and then recursively compile them.





\begin{figure}
\begin{verbatim}
@pattern
def PwSubgraph(x):
    UnaryOp = Op(1, 1)
    assert (
        UnaryOp.op_class == opclass("unary_pointwise")
    )
    y = var()
    x <= UnaryOp(PwSubgraph(y))
    return x

@pattern
def PwSubgraph(x):
    return x

def MatMulEpilog(x):
    a = var()
    b = var()
    x <= PwSubgraph(MatMul(a, b))
    return x
\end{verbatim}
\caption{Matrix Multiplication Epilog Patterns}
\label{fig:pw-subgraph}
\end{figure}




\section{Related Work}
\label{sec:related-work}
\pypm{} matches patterns in tensor computation graphs in a generic way, mediated by
a general computation graph data structure in \dlcb{}.
Of course, these computation graphs come from a particular tensor programming
framework.  \dlcb{} supports PyTorch \cite{pytorch}, JAX \cite{jax} and TensorFlow
\cite{tensorflow} backends, and so \pypm{}'s design reflects the our desire to optimize
models from those frontends in particular.

Many projects in the AI compilers space have also attempted to tackle
the more general ``retargeting'' problem that is \pypm{}'s raison d'etre.
One notable example of research in this line is Exo~\cite{exo}, which 
externalizes (as the name suggests) the process of platform-specific code generation
to libraries, rather than a compiler pass.
Another project in the area is TVM~\cite{tvm}, which is a full-fledged AI compiler
with multiple hardware backends. Unlike Exo, TVM's code generation is more traditional, and contained within the compiler
itself. XLA \cite{xla} is another project similar in scope to \dlcb{} which aims to optimize the model code
generated by the afformented deep-learning frontend frameworks.
All of the above projects have goals different than those of this paper. Rather than
arguing for our particular approach to solving the architecture targeting problem,
our focus is instead on presenting and arguing for the \pypm{} langauage itself,
and presenting its formalized core.

Some machine learning model programming frameworks like Halide~\cite{halide} and
TVM\cite{tvm} include scheduling languages that let the programmer specify a
tensor computation separately from the way that loops are ordered and memory is
laid out. The point of a scheduling langauge is to let the programmer manually
specify the way that a computation is executed on a device. Meanwhile, \dlcb{}
exists to take naively-generated model code (i.e. without a schedule), and use
rewriting combined with other techniques to schedule the computation. These are opposing solutions to similar problems --- one manual, the other automatic.
However, one could in principle use \dlcb{} in concert with a scheduling language to further optimize the scheduled code.

The idea of using rewrite rules to produce high-performance array programs or
GPU code is similarly well-explored. Projects like those of \citet{mm-beyond} and \citet{hp-functional}
use rewriting to generate GPU code and optimize existing array codes, respectively.
More recently, the success of the Egg project~\cite{egg,egglog}
has spawned a flury of works that use E-matching and equality saturation to optimize or generate code for tensor programs
\cite{tensat,diospyros,latent-idom}.

Prior work has also provided mathematical formalizations of other pattern-matching languages.
The most closely related such effort was undertaken by \citet*{efficient-e-matching}, who describe an
algorithmic virtual machine for E-matching --- a subset of \pypm{}'s matching algorithm --- and prove it
equivalent to a declarative definition of E-matching. The proof is not formalized, as the focus
of the work is on the operation of the virtual machine. As far as mechanized proof efforts of pattern
matching go, we are aware only of the work of \citet{noe-js}, who formalize the operation of JavaScript regular expression matching in Coq.

The design of \pypm{} takes inspiration from two main places. The first
is structural pattern matching as popularized by functional languages in the ML
and LISP families, where algebraic data values can be deconstructed by a
matching operator that dispatches on their shape. This pattern matching differs from \pypm{} in a
two important ways, namely that they (a) do not \emph{search} for matches as \pypm{}'s matcher does,
and (b) are required to execute efficiently (at program runtime), while \pypm{}'s pattern execution can be
somewhat inefficient due to it running at compiletime. This freedom to run (relatively) inefficiently
means that \pypm{} patterns are much more expressive as they can include alternation and recursion.
The second inspiration is from logic programming languages like Prolog \cite{prolog}, Datalog \cite{datalog}, and Egglog \cite{egglog}. As discussed in Section~\ref{sec:intro},
the concept of recursive patterns is borrowed from recursive formulae, and pattern alternates are rules with multiple disjunctive clauses.
Some \pypm{} features like function patterns fall outside standard logic programming:
function variables correspond to second-order logic programming \cite{ho-logic-programming}.
Again, \pypm{} is distinguised from a logic language because its rewriting is destructive, and so evaluation is non-monotone.

\begin{acks}
The authors would like to thank the following people for their valuable input and other contributions to
this work: Aravind Acharya, Somashekar Bhaskaracharya, William
Brandon, Cade Brown, Hanfeng Chen, Evghenii Gaburov, Bastian Hagedorn, Sean Lee
and Ben Schreiber.
\end{acks}

\bibliographystyle{ACM-Reference-Format}
\bibliography{refs}


\begin{thebibliography}{29}


\ifx \showCODEN    \undefined \def \showCODEN     #1{\unskip}     \fi
\ifx \showDOI      \undefined \def \showDOI       #1{#1}\fi
\ifx \showISBNx    \undefined \def \showISBNx     #1{\unskip}     \fi
\ifx \showISBNxiii \undefined \def \showISBNxiii  #1{\unskip}     \fi
\ifx \showISSN     \undefined \def \showISSN      #1{\unskip}     \fi
\ifx \showLCCN     \undefined \def \showLCCN      #1{\unskip}     \fi
\ifx \shownote     \undefined \def \shownote      #1{#1}          \fi
\ifx \showarticletitle \undefined \def \showarticletitle #1{#1}   \fi
\ifx \showURL      \undefined \def \showURL       {\relax}        \fi
\providecommand\bibfield[2]{#2}
\providecommand\bibinfo[2]{#2}
\providecommand\natexlab[1]{#1}
\providecommand\showeprint[2][]{arXiv:#2}

\bibitem[Abadi et~al\mbox{.}(2015)]%
        {tensorflow}
\bibfield{author}{\bibinfo{person}{Mart\'{i}n Abadi}, \bibinfo{person}{Ashish Agarwal}, \bibinfo{person}{Paul Barham}, \bibinfo{person}{Eugene Brevdo}, \bibinfo{person}{Zhifeng Chen}, \bibinfo{person}{Craig Citro}, \bibinfo{person}{Greg~S. Corrado}, \bibinfo{person}{Andy Davis}, \bibinfo{person}{Jeffrey Dean}, \bibinfo{person}{Matthieu Devin}, \bibinfo{person}{Sanjay Ghemawat}, \bibinfo{person}{Ian Goodfellow}, \bibinfo{person}{Andrew Harp}, \bibinfo{person}{Geoffrey Irving}, \bibinfo{person}{Michael Isard}, \bibinfo{person}{Yangqing Jia}, \bibinfo{person}{Rafal Jozefowicz}, \bibinfo{person}{Lukasz Kaiser}, \bibinfo{person}{Manjunath Kudlur}, \bibinfo{person}{Josh Levenberg}, \bibinfo{person}{Dandelion Man\'{e}}, \bibinfo{person}{Rajat Monga}, \bibinfo{person}{Sherry Moore}, \bibinfo{person}{Derek Murray}, \bibinfo{person}{Chris Olah}, \bibinfo{person}{Mike Schuster}, \bibinfo{person}{Jonathon Shlens}, \bibinfo{person}{Benoit Steiner}, \bibinfo{person}{Ilya Sutskever}, \bibinfo{person}{Kunal Talwar}, \bibinfo{person}{Paul Tucker}, \bibinfo{person}{Vincent Vanhoucke}, \bibinfo{person}{Vijay Vasudevan}, \bibinfo{person}{Fernanda Vi\'{e}gas}, \bibinfo{person}{Oriol Vinyals}, \bibinfo{person}{Pete Warden}, \bibinfo{person}{Martin Wattenberg}, \bibinfo{person}{Martin Wicke}, \bibinfo{person}{Yuan Yu}, {and} \bibinfo{person}{Xiaoqiang Zheng}.} \bibinfo{year}{2015}\natexlab{}.
\newblock \bibinfo{title}{{TensorFlow}: Large-Scale Machine Learning on Heterogeneous Systems}.
\newblock
\newblock
\urldef\tempurl%
\url{https://www.tensorflow.org/}
\showURL{%
\tempurl}
\newblock
\shownote{Software available from tensorflow.org}.


\bibitem[Blackford et~al\mbox{.}(2002)]%
        {blas}
\bibfield{author}{\bibinfo{person}{L~Susan Blackford}, \bibinfo{person}{Antoine Petitet}, \bibinfo{person}{Roldan Pozo}, \bibinfo{person}{Karin Remington}, \bibinfo{person}{R~Clint Whaley}, \bibinfo{person}{James Demmel}, \bibinfo{person}{Jack Dongarra}, \bibinfo{person}{Iain Duff}, \bibinfo{person}{Sven Hammarling}, \bibinfo{person}{Greg Henry}, {et~al\mbox{.}}} \bibinfo{year}{2002}\natexlab{}.
\newblock \showarticletitle{An updated set of basic linear algebra subprograms (BLAS)}.
\newblock \bibinfo{journal}{\emph{ACM Trans. Math. Software}} \bibinfo{volume}{28}, \bibinfo{number}{2} (\bibinfo{year}{2002}), \bibinfo{pages}{135--151}.
\newblock


\bibitem[Chen et~al\mbox{.}(2018)]%
        {tvm}
\bibfield{author}{\bibinfo{person}{Tianqi Chen}, \bibinfo{person}{Thierry Moreau}, \bibinfo{person}{Ziheng Jiang}, \bibinfo{person}{Lianmin Zheng}, \bibinfo{person}{Eddie Yan}, \bibinfo{person}{Meghan Cowan}, \bibinfo{person}{Haichen Shen}, \bibinfo{person}{Leyuan Wang}, \bibinfo{person}{Yuwei Hu}, \bibinfo{person}{Luis Ceze}, \bibinfo{person}{Carlos Guestrin}, {and} \bibinfo{person}{Arvind Krishnamurthy}.} \bibinfo{year}{2018}\natexlab{}.
\newblock \showarticletitle{TVM: an automated end-to-end optimizing compiler for deep learning}. In \bibinfo{booktitle}{\emph{Proceedings of the 13th USENIX Conference on Operating Systems Design and Implementation}} (Carlsbad, CA, USA) \emph{(\bibinfo{series}{OSDI'18})}. \bibinfo{publisher}{USENIX Association}, \bibinfo{address}{USA}, \bibinfo{pages}{579–594}.
\newblock
\showISBNx{9781931971478}


\bibitem[Dao et~al\mbox{.}(2022)]%
        {flashattention}
\bibfield{author}{\bibinfo{person}{Tri Dao}, \bibinfo{person}{Dan Fu}, \bibinfo{person}{Stefano Ermon}, \bibinfo{person}{Atri Rudra}, {and} \bibinfo{person}{Christopher R{\'e}}.} \bibinfo{year}{2022}\natexlab{}.
\newblock \showarticletitle{Flashattention: Fast and memory-efficient exact attention with io-awareness}.
\newblock \bibinfo{journal}{\emph{Advances in Neural Information Processing Systems}}  \bibinfo{volume}{35} (\bibinfo{year}{2022}), \bibinfo{pages}{16344--16359}.
\newblock


\bibitem[De~Santo et~al\mbox{.}(2024)]%
        {noe-js}
\bibfield{author}{\bibinfo{person}{No\'{e} De~Santo}, \bibinfo{person}{Aur\`{e}le Barri\`{e}re}, {and} \bibinfo{person}{Cl\'{e}ment Pit-Claudel}.} \bibinfo{year}{2024}\natexlab{}.
\newblock \showarticletitle{A Coq Mechanization of JavaScript Regular Expression Semantics}.
\newblock \bibinfo{journal}{\emph{Proc. ACM Program. Lang.}} \bibinfo{volume}{8}, \bibinfo{number}{ICFP}, Article \bibinfo{articleno}{270} (\bibinfo{date}{aug} \bibinfo{year}{2024}), \bibinfo{numpages}{29}~pages.
\newblock
\urldef\tempurl%
\url{https://doi.org/10.1145/3674666}
\showDOI{\tempurl}


\bibitem[development team(2004)]%
        {coq}
\bibfield{author}{\bibinfo{person}{The~Coq development team}.} \bibinfo{year}{2004}\natexlab{}.
\newblock \bibinfo{booktitle}{\emph{The Coq proof assistant reference manual}}.
\newblock LogiCal Project.
\newblock
\urldef\tempurl%
\url{http://coq.inria.fr}
\showURL{%
\tempurl}
\newblock
\shownote{Version 8.0}.


\bibitem[Egi and Nishiwaki(2020)]%
        {egison}
\bibfield{author}{\bibinfo{person}{Satoshi Egi} {and} \bibinfo{person}{Yuichi Nishiwaki}.} \bibinfo{year}{2020}\natexlab{}.
\newblock \showarticletitle{Functional Programming in Pattern-Match-Oriented Programming Style}.
\newblock \bibinfo{journal}{\emph{arXiv preprint arXiv:2002.06176}} (\bibinfo{year}{2020}).
\newblock


\bibitem[Frostig et~al\mbox{.}(2018)]%
        {jax}
\bibfield{author}{\bibinfo{person}{Roy Frostig}, \bibinfo{person}{Matthew Johnson}, {and} \bibinfo{person}{Chris Leary}.} \bibinfo{year}{2018}\natexlab{}.
\newblock \showarticletitle{Compiling machine learning programs via high-level tracing}.
\newblock
\urldef\tempurl%
\url{https://mlsys.org/Conferences/doc/2018/146.pdf}
\showURL{%
\tempurl}


\bibitem[Hagedorn et~al\mbox{.}(2020)]%
        {hp-functional}
\bibfield{author}{\bibinfo{person}{Bastian Hagedorn}, \bibinfo{person}{Johannes Lenfers}, \bibinfo{person}{Thomas Kundefinedhler}, \bibinfo{person}{Xueying Qin}, \bibinfo{person}{Sergei Gorlatch}, {and} \bibinfo{person}{Michel Steuwer}.} \bibinfo{year}{2020}\natexlab{}.
\newblock \showarticletitle{Achieving high-performance the functional way: a functional pearl on expressing high-performance optimizations as rewrite strategies}.
\newblock \bibinfo{journal}{\emph{Proc. ACM Program. Lang.}} \bibinfo{volume}{4}, \bibinfo{number}{ICFP}, Article \bibinfo{articleno}{92} (\bibinfo{date}{Aug.} \bibinfo{year}{2020}), \bibinfo{numpages}{29}~pages.
\newblock
\urldef\tempurl%
\url{https://doi.org/10.1145/3408974}
\showDOI{\tempurl}


\bibitem[Hendrycks and Gimpel(2016)]%
        {gelu}
\bibfield{author}{\bibinfo{person}{Dan Hendrycks} {and} \bibinfo{person}{Kevin Gimpel}.} \bibinfo{year}{2016}\natexlab{}.
\newblock \showarticletitle{Gaussian error linear units (gelus)}.
\newblock \bibinfo{journal}{\emph{arXiv preprint arXiv:1606.08415}} (\bibinfo{year}{2016}).
\newblock


\bibitem[Ikarashi et~al\mbox{.}(2022)]%
        {exo}
\bibfield{author}{\bibinfo{person}{Yuka Ikarashi}, \bibinfo{person}{Gilbert~Louis Bernstein}, \bibinfo{person}{Alex Reinking}, \bibinfo{person}{Hasan Genc}, {and} \bibinfo{person}{Jonathan Ragan-Kelley}.} \bibinfo{year}{2022}\natexlab{}.
\newblock \showarticletitle{Exocompilation for productive programming of hardware accelerators}. In \bibinfo{booktitle}{\emph{Proceedings of the 43rd ACM SIGPLAN International Conference on Programming Language Design and Implementation}} (San Diego, CA, USA) \emph{(\bibinfo{series}{PLDI 2022})}. \bibinfo{publisher}{Association for Computing Machinery}, \bibinfo{address}{New York, NY, USA}, \bibinfo{pages}{703–718}.
\newblock
\showISBNx{9781450392655}
\urldef\tempurl%
\url{https://doi.org/10.1145/3519939.3523446}
\showDOI{\tempurl}


\bibitem[Kowalski(2014)]%
        {logic-programming}
\bibfield{author}{\bibinfo{person}{Robert Kowalski}.} \bibinfo{year}{2014}\natexlab{}.
\newblock \showarticletitle{Logic Programming}.
\newblock In \bibinfo{booktitle}{\emph{Computational Logic}}, \bibfield{editor}{\bibinfo{person}{Jörg~H. Siekmann}} (Ed.). \bibinfo{series}{Handbook of the History of Logic}, Vol.~\bibinfo{volume}{9}. \bibinfo{publisher}{North-Holland}, \bibinfo{pages}{523--569}.
\newblock
\showISSN{1874-5857}
\urldef\tempurl%
\url{https://doi.org/10.1016/B978-0-444-51624-4.50012-5}
\showDOI{\tempurl}


\bibitem[Kowalski and Smoliar(1982)]%
        {datalog}
\bibfield{author}{\bibinfo{person}{Robert Kowalski} {and} \bibinfo{person}{Steve Smoliar}.} \bibinfo{year}{1982}\natexlab{}.
\newblock \showarticletitle{Logic for problem solving}.
\newblock \bibinfo{journal}{\emph{SIGSOFT Softw. Eng. Notes}} \bibinfo{volume}{7}, \bibinfo{number}{2} (\bibinfo{date}{apr} \bibinfo{year}{1982}), \bibinfo{pages}{61–62}.
\newblock
\showISSN{0163-5948}
\urldef\tempurl%
\url{https://doi.org/10.1145/1005937.1005947}
\showDOI{\tempurl}


\bibitem[Miller and Nadathur(2012)]%
        {ho-logic-programming}
\bibfield{author}{\bibinfo{person}{D. Miller} {and} \bibinfo{person}{G. Nadathur}.} \bibinfo{year}{2012}\natexlab{}.
\newblock \bibinfo{booktitle}{\emph{Programming with Higher-Order Logic}}.
\newblock \bibinfo{publisher}{Cambridge University Press}.
\newblock
\showISBNx{9781139510424}
\urldef\tempurl%
\url{https://books.google.com/books?id=xKsgAwAAQBAJ}
\showURL{%
\tempurl}


\bibitem[Moura and Bj\o{}rner(2007)]%
        {efficient-e-matching}
\bibfield{author}{\bibinfo{person}{Leonardo Moura} {and} \bibinfo{person}{Nikolaj Bj\o{}rner}.} \bibinfo{year}{2007}\natexlab{}.
\newblock \showarticletitle{Efficient E-Matching for SMT Solvers}. In \bibinfo{booktitle}{\emph{Proceedings of the 21st International Conference on Automated Deduction: Automated Deduction}} (Bremen, Germany) \emph{(\bibinfo{series}{CADE-21})}. \bibinfo{publisher}{Springer-Verlag}, \bibinfo{address}{Berlin, Heidelberg}, \bibinfo{pages}{183–198}.
\newblock
\showISBNx{9783540735946}
\urldef\tempurl%
\url{https://doi.org/10.1007/978-3-540-73595-3_13}
\showDOI{\tempurl}


\bibitem[Nvidia({[n.\,d.]})]%
        {cublas}
\bibfield{author}{\bibinfo{person}{Nvidia}.} \bibinfo{year}{[n.\,d.]}\natexlab{}.
\newblock \bibinfo{title}{{cuBLAS} Documentation}.
\newblock \bibinfo{howpublished}{\url{https://docs.nvidia.com/cuda/cublas/}}.
\newblock
\newblock
\shownote{Accessed: 2014-09-10}.


\bibitem[Paszke et~al\mbox{.}(2019)]%
        {pytorch}
\bibfield{author}{\bibinfo{person}{Adam Paszke}, \bibinfo{person}{Sam Gross}, \bibinfo{person}{Francisco Massa}, \bibinfo{person}{Adam Lerer}, \bibinfo{person}{James Bradbury}, \bibinfo{person}{Gregory Chanan}, \bibinfo{person}{Trevor Killeen}, \bibinfo{person}{Zeming Lin}, \bibinfo{person}{Natalia Gimelshein}, \bibinfo{person}{Luca Antiga}, \bibinfo{person}{Alban Desmaison}, \bibinfo{person}{Andreas K\"{o}pf}, \bibinfo{person}{Edward Yang}, \bibinfo{person}{Zach DeVito}, \bibinfo{person}{Martin Raison}, \bibinfo{person}{Alykhan Tejani}, \bibinfo{person}{Sasank Chilamkurthy}, \bibinfo{person}{Benoit Steiner}, \bibinfo{person}{Lu Fang}, \bibinfo{person}{Junjie Bai}, {and} \bibinfo{person}{Soumith Chintala}.} \bibinfo{year}{2019}\natexlab{}.
\newblock \bibinfo{booktitle}{\emph{PyTorch: an imperative style, high-performance deep learning library}}.
\newblock \bibinfo{publisher}{Curran Associates Inc.}, \bibinfo{address}{Red Hook, NY, USA}.
\newblock


\bibitem[Phothilimthana et~al\mbox{.}(2019)]%
        {swizzleinventor}
\bibfield{author}{\bibinfo{person}{Phitchaya~Mangpo Phothilimthana}, \bibinfo{person}{Archibald~Samuel Elliott}, \bibinfo{person}{An Wang}, \bibinfo{person}{Abhinav Jangda}, \bibinfo{person}{Bastian Hagedorn}, \bibinfo{person}{Henrik Barthels}, \bibinfo{person}{Samuel~J. Kaufman}, \bibinfo{person}{Vinod Grover}, \bibinfo{person}{Emina Torlak}, {and} \bibinfo{person}{Rastislav Bodik}.} \bibinfo{year}{2019}\natexlab{}.
\newblock \showarticletitle{Swizzle Inventor: Data Movement Synthesis for GPU Kernels}. In \bibinfo{booktitle}{\emph{Proceedings of the Twenty-Fourth International Conference on Architectural Support for Programming Languages and Operating Systems}} (Providence, RI, USA) \emph{(\bibinfo{series}{ASPLOS '19})}. \bibinfo{publisher}{Association for Computing Machinery}, \bibinfo{address}{New York, NY, USA}, \bibinfo{pages}{65–78}.
\newblock
\showISBNx{9781450362405}
\urldef\tempurl%
\url{https://doi.org/10.1145/3297858.3304059}
\showDOI{\tempurl}


\bibitem[Ragan-Kelley et~al\mbox{.}(2013)]%
        {halide}
\bibfield{author}{\bibinfo{person}{Jonathan Ragan-Kelley}, \bibinfo{person}{Connelly Barnes}, \bibinfo{person}{Andrew Adams}, \bibinfo{person}{Sylvain Paris}, \bibinfo{person}{Fr\'{e}do Durand}, {and} \bibinfo{person}{Saman Amarasinghe}.} \bibinfo{year}{2013}\natexlab{}.
\newblock \showarticletitle{Halide: a language and compiler for optimizing parallelism, locality, and recomputation in image processing pipelines}. In \bibinfo{booktitle}{\emph{Proceedings of the 34th ACM SIGPLAN Conference on Programming Language Design and Implementation}} (Seattle, Washington, USA) \emph{(\bibinfo{series}{PLDI '13})}. \bibinfo{publisher}{Association for Computing Machinery}, \bibinfo{address}{New York, NY, USA}, \bibinfo{pages}{519–530}.
\newblock
\showISBNx{9781450320146}
\urldef\tempurl%
\url{https://doi.org/10.1145/2491956.2462176}
\showDOI{\tempurl}


\bibitem[Sabne(2020)]%
        {xla}
\bibfield{author}{\bibinfo{person}{Amit Sabne}.} \bibinfo{year}{2020}\natexlab{}.
\newblock \bibinfo{title}{XLA : Compiling Machine Learning for Peak Performance}.
\newblock
\newblock


\bibitem[Steuwer et~al\mbox{.}(2016)]%
        {mm-beyond}
\bibfield{author}{\bibinfo{person}{Michel Steuwer}, \bibinfo{person}{Toomas Remmelg}, {and} \bibinfo{person}{Christophe Dubach}.} \bibinfo{year}{2016}\natexlab{}.
\newblock \showarticletitle{Matrix multiplication beyond auto-tuning: Rewrite-based GPU code generation}. In \bibinfo{booktitle}{\emph{2016 International Conference on Compliers, Architectures, and Sythesis of Embedded Systems (CASES)}}. \bibinfo{pages}{1--10}.
\newblock
\urldef\tempurl%
\url{https://doi.org/10.1145/2968455.2968521}
\showDOI{\tempurl}


\bibitem[Van~der Cruysse and Dubach(2024)]%
        {latent-idom}
\bibfield{author}{\bibinfo{person}{Jonathan Van~der Cruysse} {and} \bibinfo{person}{Christophe Dubach}.} \bibinfo{year}{2024}\natexlab{}.
\newblock \showarticletitle{Latent Idiom Recognition for a Minimalist Functional Array Language Using Equality Saturation}. In \bibinfo{booktitle}{\emph{Proceedings of the 2024 IEEE/ACM International Symposium on Code Generation and Optimization}} (Edinburgh, United Kingdom) \emph{(\bibinfo{series}{CGO '24})}. \bibinfo{publisher}{IEEE Press}, \bibinfo{pages}{270–282}.
\newblock
\showISBNx{9798350395099}
\urldef\tempurl%
\url{https://doi.org/10.1109/CGO57630.2024.10444879}
\showDOI{\tempurl}


\bibitem[VanHattum et~al\mbox{.}(2021)]%
        {diospyros}
\bibfield{author}{\bibinfo{person}{Alexa VanHattum}, \bibinfo{person}{Rachit Nigam}, \bibinfo{person}{Vincent~T. Lee}, \bibinfo{person}{James Bornholt}, {and} \bibinfo{person}{Adrian Sampson}.} \bibinfo{year}{2021}\natexlab{}.
\newblock \showarticletitle{Vectorization for digital signal processors via equality saturation}. In \bibinfo{booktitle}{\emph{Proceedings of the 26th ACM International Conference on Architectural Support for Programming Languages and Operating Systems}} (Virtual, USA) \emph{(\bibinfo{series}{ASPLOS '21})}. \bibinfo{publisher}{Association for Computing Machinery}, \bibinfo{address}{New York, NY, USA}, \bibinfo{pages}{874–886}.
\newblock
\showISBNx{9781450383172}
\urldef\tempurl%
\url{https://doi.org/10.1145/3445814.3446707}
\showDOI{\tempurl}


\bibitem[Vaswani et~al\mbox{.}(2017)]%
        {attention}
\bibfield{author}{\bibinfo{person}{Ashish Vaswani}, \bibinfo{person}{Noam~M. Shazeer}, \bibinfo{person}{Niki Parmar}, \bibinfo{person}{Jakob Uszkoreit}, \bibinfo{person}{Llion Jones}, \bibinfo{person}{Aidan~N. Gomez}, \bibinfo{person}{Lukasz Kaiser}, {and} \bibinfo{person}{Illia Polosukhin}.} \bibinfo{year}{2017}\natexlab{}.
\newblock \showarticletitle{Attention is All you Need}. In \bibinfo{booktitle}{\emph{Neural Information Processing Systems}}.
\newblock
\urldef\tempurl%
\url{https://api.semanticscholar.org/CorpusID:13756489}
\showURL{%
\tempurl}


\bibitem[Wielemaker et~al\mbox{.}(2012)]%
        {prolog}
\bibfield{author}{\bibinfo{person}{Jan Wielemaker}, \bibinfo{person}{Tom Schrijvers}, \bibinfo{person}{Markus Triska}, {and} \bibinfo{person}{Torbj\"o{}rn Lager}.} \bibinfo{year}{2012}\natexlab{}.
\newblock \showarticletitle{{SWI-Prolog}}.
\newblock \bibinfo{journal}{\emph{Theory and Practice of Logic Programming}} \bibinfo{volume}{12}, \bibinfo{number}{1-2} (\bibinfo{year}{2012}), \bibinfo{pages}{67--96}.
\newblock
\showISSN{1471-0684}


\bibitem[Willsey et~al\mbox{.}(2021)]%
        {egg}
\bibfield{author}{\bibinfo{person}{Max Willsey}, \bibinfo{person}{Chandrakana Nandi}, \bibinfo{person}{Yisu~Remy Wang}, \bibinfo{person}{Oliver Flatt}, \bibinfo{person}{Zachary Tatlock}, {and} \bibinfo{person}{Pavel Panchekha}.} \bibinfo{year}{2021}\natexlab{}.
\newblock \showarticletitle{egg: Fast and extensible equality saturation}.
\newblock \bibinfo{journal}{\emph{Proc. ACM Program. Lang.}} \bibinfo{volume}{5}, \bibinfo{number}{POPL}, Article \bibinfo{articleno}{23} (\bibinfo{date}{jan} \bibinfo{year}{2021}), \bibinfo{numpages}{29}~pages.
\newblock
\urldef\tempurl%
\url{https://doi.org/10.1145/3434304}
\showDOI{\tempurl}


\bibitem[Wolf et~al\mbox{.}(2019)]%
        {hf-transformers}
\bibfield{author}{\bibinfo{person}{Thomas Wolf}, \bibinfo{person}{Lysandre Debut}, \bibinfo{person}{Victor Sanh}, \bibinfo{person}{Julien Chaumond}, \bibinfo{person}{Clement Delangue}, \bibinfo{person}{Anthony Moi}, \bibinfo{person}{Pierric Cistac}, \bibinfo{person}{Tim Rault}, \bibinfo{person}{R{\'e}mi Louf}, \bibinfo{person}{Morgan Funtowicz}, \bibinfo{person}{Joe Davison}, \bibinfo{person}{Sam Shleifer}, \bibinfo{person}{Patrick von Platen}, \bibinfo{person}{Clara Ma}, \bibinfo{person}{Yacine Jernite}, \bibinfo{person}{Julien Plu}, \bibinfo{person}{Canwen Xu}, \bibinfo{person}{Teven~Le Scao}, \bibinfo{person}{Sylvain Gugger}, \bibinfo{person}{Mariama Drame}, \bibinfo{person}{Quentin Lhoest}, {and} \bibinfo{person}{Alexander~M. Rush}.} \bibinfo{year}{2019}\natexlab{}.
\newblock \showarticletitle{HuggingFace's Transformers: State-of-the-art Natural Language Processing}.
\newblock \bibinfo{journal}{\emph{ArXiv}}  \bibinfo{volume}{abs/1910.03771} (\bibinfo{year}{2019}).
\newblock
\urldef\tempurl%
\url{https://api.semanticscholar.org/CorpusID:268093756}
\showURL{%
\tempurl}


\bibitem[Yang et~al\mbox{.}(2021)]%
        {tensat}
\bibfield{author}{\bibinfo{person}{Yichen Yang}, \bibinfo{person}{Phitchaya Phothilimthana}, \bibinfo{person}{Yisu Wang}, \bibinfo{person}{Max Willsey}, \bibinfo{person}{Sudip Roy}, {and} \bibinfo{person}{Jacques Pienaar}.} \bibinfo{year}{2021}\natexlab{}.
\newblock \showarticletitle{Equality saturation for tensor graph superoptimization}.
\newblock \bibinfo{journal}{\emph{Proceedings of Machine Learning and Systems}}  \bibinfo{volume}{3} (\bibinfo{year}{2021}), \bibinfo{pages}{255--268}.
\newblock


\bibitem[Zhang et~al\mbox{.}(2023)]%
        {egglog}
\bibfield{author}{\bibinfo{person}{Yihong Zhang}, \bibinfo{person}{Yisu~Remy Wang}, \bibinfo{person}{Oliver Flatt}, \bibinfo{person}{David Cao}, \bibinfo{person}{Philip Zucker}, \bibinfo{person}{Eli Rosenthal}, \bibinfo{person}{Zachary Tatlock}, {and} \bibinfo{person}{Max Willsey}.} \bibinfo{year}{2023}\natexlab{}.
\newblock \showarticletitle{Better Together: Unifying Datalog and Equality Saturation}.
\newblock \bibinfo{journal}{\emph{Proc. ACM Program. Lang.}} \bibinfo{volume}{7}, \bibinfo{number}{PLDI}, Article \bibinfo{articleno}{125} (\bibinfo{date}{jun} \bibinfo{year}{2023}), \bibinfo{numpages}{25}~pages.
\newblock
\urldef\tempurl%
\url{https://doi.org/10.1145/3591239}
\showDOI{\tempurl}


\end{thebibliography}

\ifextended
\newpage
\appendix

\section{Technical Appendix}
\label{app:technical}

The full grammar of terms and patterns is found in
Figure~\ref{fig:all-terms-and-patterns}. A listing of the full declarative
semantics of \pypm{} is found in Figure~\ref{fig:all-decsem}, and a listing of
the full algorithmic semantics is in Figure~\ref{fig:all-algsem-1} and Figure~\ref{fig:all-algsem-2}.

\begin{figure*}
    \begin{align*}
        t &::= f(t_1,\dots,t_n) \qquad (\arity{f} = n)\\
        e &::= t.\alpha \mid x.\alpha \mid e + e \mid e - e \mid \dots \\
        g &::= e = e' \mid e < e' \mid g \wedge g' \mid g \vee g' \mid \neg g\\
        p &::= x \\
        &\mid f(p_1,\dots,p_n) \qquad (\arity{f} = n)\\
        &\mid p \| p'\\
        &\mid \guardPat{p}{g}\\
        &\mid \exists x.p\\
        &\mid \backMatch{p}{x}{p'}\\
        &\mid F(p_1,\dots,p_n)\\
        &\mid \mu P(x_1,\dots,x_n)[y_1,\dots,y_n]\\
        &\mid P(x_1,\dots,x_n)
    \end{align*}
        \caption{Grammar of Terms and Patterns}
        \label{fig:all-terms-and-patterns}
\end{figure*}

\begin{figure*}
    \begin{mathpar}
        \infer[P-Var]{\theta(x) \mapsto t}{\match{x}{t}{\theta}{\phi}}
        
        \infer[P-Fun]{
            \forall i. \left(\match{p_i}{t_i}{\theta}{\phi}\right)\\
        }{
            \match{f(p_1,\dots,p_n)}{f(t_1,\dots,t_n)}{\theta}{\phi}
        }
        
        \infer[P-Alt-1]{
            \match{p}{t}{\theta}{\phi}
        }{
            \match{p \| p'}{t}{\theta}{\phi}
        }

        \infer[P-Alt-2]{
            \match{p'}{t}{\theta}{\phi}
        }{
            \match{p \| p'}{t}{\theta}{\phi}
        }

        \inferrule[P-Guard]{
            \match{p}{t}{\theta}{\phi}\\
            \scott{g[\theta]} = \texttt{True}
        }{
            \match{\guardPat{p}{g}}{t}{\theta}{\phi}
        }   

        \inferrule[P-Exists]{
            \match{p}{t}{\left(\theta \cup \{x \mapsto t'\}\right)}{\phi}
        }{
            \match{\exists x.p}{t}{\theta}{\phi}
        }   
    
        \inferrule[P-MatchConstr]{
            \match{p}{t}{\theta}{\phi}\\
            \theta(x) \mapsto t'\\
            \match{p'}{t'}{\theta}{\phi}
        }{
            \match{\backMatch{p}{x}{p'}}{t}{\theta}{\phi}
        }

        \inferrule[P-Fun-Var]{
            \phi(F) \mapsto f\\
            \forall i \left(\match{p_i}{t_i}{\theta}{\phi}\right)
        }{
            \match{F(p_1,\dots,p_n)}{f(t_1,\dots,t_n)}{\theta}{\phi}
        }

        \inferrule[P-Mu]{
            \match{p[\mu P(x_1,\dots,x_n).p/P][y_i/x_i]}{t}{\theta}{\phi}
        }{
            \match{\mu P(x_1,\dots,x_n)[y_1,\dots,y_n].p}{t}{\theta}{\phi}
        }
        \end{mathpar}
    \caption{Declarative Semantics}
    \label{fig:all-decsem}
\end{figure*}

\begin{figure*}
    \begin{align*}
        a &::= \actDoMatch{p}{t} \mid \actCheckGuard{g} \mid \actUnbind{x} \mid \actBackMatch{x}{p}
        \\
        k &::= \contNil \mid \contCons{a}{k}\\
        stk &::= \stkEmp \mid \bt{stk}{\theta}{\phi}{k}\\
        st &::= \success{\theta}{\phi} \mid \failure \mid \running{\theta}{\phi}{stk}{k}\\
        \\
    \backtrack{\stkEmp} &= \failure\\
    \backtrack{\bt{stk}{\theta}{\phi}{k}} &= \running{\theta}{\phi}{stk}{k}
    \end{align*}

    \begin{mathpar}
    \infer[ST-Success]{ }{
        \stateTrans{
            \running{\theta}{\phi}{stk}{\contNil}
        }{
            \success{\theta}{\phi}
        }
    }

    \infer[ST-Match-Var-Bind]{
        \neg \exists t'. \left(\theta(x) \mapsto t'\right)
    }{
        \stateTrans{
            \running{\theta}{\phi}{stk}{\contCons{\actDoMatch{x}{t}}{k}}
        }{
            \running{\theta \cup \{x \mapsto t\}}{\phi}{stk}{k}
        }
    }

    \infer[ST-Match-Var-Bound]{
        \theta(x) \mapsto t 
    }{
        \stateTrans{
            \running{\theta}{\phi}{stk}{\contCons{\actDoMatch{x}{t}}{k}}
        }{
            \running{\theta}{\phi}{stk}{k}
        }
    }

    \infer[ST-Match-Var-Conflict]{
        \theta(x) \mapsto t'\\
        t' \neq t
    }{
        \stateTrans{
            \running{\theta}{\phi}{stk}{\contCons{\actDoMatch{x}{t}}{k}}
        }{
            \backtrack{stk}
        }
    }

    \infer[ST-Match-Fun]{
        k' = \left[\actDoMatch{p_1}{t_1},\dots,\actDoMatch{p_n}{t_n}\right]\\
    }{
        \stateTrans{
            \running{\theta}{\phi}{stk}{\contCons{\actDoMatch{f(p_1,\dots,p_n)}{f(t_1,\dots,t_n)}}{k}}
        }{
            \running{\theta}{\phi}{stk}{k' \texttt{++} k}
        }
    }

    \infer[ST-Match-Fun-Conflict]{
        f \neq g \vee m \neq n\\
    }{
        \stateTrans{
            \running{\theta}{\phi}{stk}{\contCons{\actDoMatch{f(p_1,\dots,p_n)}{g(t_1,\dots,t_m)}}{k}}
        }{
            \backtrack{stk}
        }
    }

    \infer[ST-Match-Alt]{
        stk' = \bt{stk}{\theta}{\phi}{\contCons{\actDoMatch{p'}{t}}{k}}
    }{
        \stateTrans{
            \running{\theta}{\phi}{stk}{\contCons{\actDoMatch{p \| p'}{t}}{k}}
        }{
            \running{\theta}{\phi}{stk'}{\contCons{\actDoMatch{p}{t}}{k}}
        }
    }

    \infer[ST-Match-Guard]{ }{
        \stateTrans{
            \running{\theta}{\phi}{stk}{\contCons{\actDoMatch{\guardPat{p}{g}}{t}}{k}}
        }{
            \running{\theta}{\phi}{stk}{\contCons{\actDoMatch{p}{t}}{\contCons{\actCheckGuard{g}}{k}}}
        }
    }

    \infer[ST-CheckGuard-Continue]{
        \scott{g[\theta]} = \texttt{True}
    }{
        \stateTrans{
            \running{\theta}{\phi}{stk}{\contCons{\actCheckGuard{g}}{k}}
        }{
            \running{\theta}{\phi}{stk}{k}
        }
    }
   
    \infer[ST-CheckGuard-Backtrack]{
        \scott{g[\theta]} = \texttt{False}
    }{
        \stateTrans{
            \running{\theta}{\phi}{stk}{\contCons{\actCheckGuard{g}}{k}}
        }{
            \backtrack{stk}
        }
    }

    \infer[ST-CheckName]{
        \theta(x) \mapsto t
    }{
        \stateTrans{
            \running{\theta}{\phi}{stk}{\contCons{\actUnbind{x}}{k}}
        }{
            \running{\theta}{\phi}{stk}{k}
        }
    }

    \infer[ST-MatchConstr]{
        \theta(x) \mapsto t
    }{
        \stateTrans{
            \running{\theta}{\phi}{stk}{\contCons{\actBackMatch{x}{p}}{k}}
        }{
            \running{\theta}{\phi}{stk}{\contCons{\actDoMatch{p}{t}}{k}}
        }
    }
    \end{mathpar}
   \caption{Algorithmic Semantics Part 1}
   \label{fig:all-algsem-1}
\end{figure*}

\begin{figure*}
    \begin{mathpar}
     \infer[ST-Match-Exists]{
        k' = \actUnbind{x} \texttt{::} k
    }{
        \stateTrans{
            \running{\theta}{\phi}{stk}{\contCons{\actDoMatch{\exists x.p}{t}}{k}}
        }{
            \runningNOFUN{\theta}{stk}{\contCons{\actDoMatch{p}{t}}{k'}}
        }
    }
   
    \infer[ST-Match-MatchConstr]{
        k' = \actBackMatch{x}{p'} \texttt{::} k
    }{
        \stateTrans{
            \running{\theta}{\phi}{stk}{\contCons{\actDoMatch{\backMatch{p}{x}{p'}}{t}}{k}}
        }{
            \running{\theta}{\phi}{stk}{\contCons{\actDoMatch{p}{t}}{k'}}
        }
    }

     \inferrule[ST-Match-Fun-Var-Bind]{
         \neg\exists f'. \phi(x) \mapsto f'\\
         k' = \left[\actDoMatch{p_1}{t_1},\dots,\actDoMatch{p_n}{t_n}\right]
     }{
         \stateTrans{
             \running{\theta}{\phi}{stk}{\contCons{\actDoMatch{F(p_1,\dots,p_n)}{f(t_1,\dots,t_n)}}{k}}
         }{
             \running{\theta}{\phi \cup \{F \mapsto f\}}{stk}{k' \texttt{++} k}
         }
     }\\

     \inferrule[ST-Match-Fun-Var-Bound]{
         \phi(x) \mapsto f
     }{
         \stateTrans{
             \running{\theta}{\phi}{stk}{\contCons{\actDoMatch{F(p_1,\dots,p_n)}{f(t_1,\dots,t_n)}}{k}}
         }{
             \running{\theta}{\phi}{stk}{k' \texttt{++} k}
         }
     }

     \inferrule[ST-Match-Fun-Var-Conflict]{
         \left(\phi(x) \mapsto g \wedge f \neq g\right) \vee m \neq n
     }{
         \stateTrans{
             \running{\theta}{\phi}{stk}{\contCons{\actDoMatch{F(p_1,\dots,p_n)}{f(t_1,\dots,t_m)}}{k}}
         }{
             \backtrack{stk}
         }
     }

     \infer[ST-Match-Mu]{
        p' = p[\mu P(x_1,\dots,x_n)/P][y_i/x_i]
    }{
        \stateTrans{
            \running{\theta}{\phi}{stk}{\contCons{\actDoMatch{\mu P(x_1,\dots,x_n)[y_1,\dots,y_n].p}{t}}{k}}
        }{
            \running{\theta}{\phi}{stk}{\contCons{\actDoMatch{p'}{t}}{k}}
        }
    }

    \end{mathpar}
   \caption{Algorithmic Semantics Part 2}
   \label{fig:all-algsem-2}
\end{figure*}
\fi

\end{document}
\endinput